\numberwithin{equation}{section}
\DeclareMathOperator{\sgn}{sgn}
\DeclareMathOperator{\tr}{Tr}
\DeclareMathOperator{\trs}{tr}
\newtheorem{theorem}{Theorem}
\newtheorem{proposition}{Proposition}
\newtheorem{lemma}{Lemma}[section]
\theoremstyle{definition}
\newtheorem{remark}{Remark}[section]
\newtheorem{definition}{Definition}[section]
\newcommand{\dda}{\mathrm{d}}
\newcommand{\de}{\,\dda}
\renewcommand{\Re}{\textrm{Re}}
\renewcommand{\Im}{\textrm{Im}}
\newcommand{\cN}{\mathcal{N}}
\newcommand{\Tr}{\mathrm{Tr}}
\newcommand{\gesssim}{\gtrsim}
\renewcommand{\Re}{\textrm{Re}}
\renewcommand{\Im}{\textrm{Im}}
\renewcommand{\rho}{\varrho}
\newcommand{\eps}{\varepsilon}
\renewcommand{\rho}{\varrho}
\renewcommand{\epsilon}{\varepsilon}
\newcommand{\nn}{\nonumber}
\newcommand{\R} {\mathbb{R}}
\newcommand{\ou}{%
\mathrel{%
\vcenter{\offinterlineskip
\ialign{##\cr$\lesssim$\cr\noalign{\kern-1.5pt}$\gtrsim$\cr}%
}%
}%
}
\theoremstyle{definition}
\newcommand{\beq}{\begin{equation}}
\newcommand{\eeq}{\end{equation}}
\begin{document}

\title{A note on spontaneous symmetry breaking in the mean-field Bose gas}

\author{Andreas Deuchert, Phan Th\`anh Nam and Marcin Napiórkowski}

\date{\today}

\maketitle

\begin{abstract} 
We consider the homogeneous Bose gas in the three-dimensional unit torus, where $N$ particles interact via a two-body potential of the form $N^{-1} v(x)$. The system is studied at inverse temperatures of order $N^{-2/3}$, which corresponds to the temperature scale of the Bose--Einstein condensation phase transition. We show that spontaneous $U(1)$
symmetry breaking occurs if and only if the system exhibits Bose–Einstein condensation in the sense that the one-particle density matrix of the Gibbs state has a macroscopic eigenvalue.
\end{abstract}

\setcounter{tocdepth}{2}
\tableofcontents

\section{Introduction and main results} \label{sec:intro}


\subsection{Spontaneous symmetry breaking and quasi-averages}

Spontaneous symmetry breaking plays a pivotal role in understanding complex phenomena across various domains of physics, from condensed matter systems to high-energy particle interactions. Generally,  it refers to a situation when a symmetry of the Hamiltonian or Lagrangian of a system is not present in the state under consideration - usually  a ground state or a thermal equilibrium state.

In this note, we shall investigate symmetry breaking in a system of weakly interacting bosons. In  non-relativistic quantum many-body systems, the associated symmetry is the $U(1)$ symmetry corresponding to particle number conservation of the underlying many-body Hamiltonian (cf. \eqref{eq:Hamiltonian}). It is well-known in statistical mechanics that many phase transitions are accompanied by symmetry breaking. As has been shown in the seminal papers of Bose and Einstein \cite{Bose1924,Einstein1924,Einstein1925}, the ideal (non-interacting) Bose gas undergoes the Bose--Einstein Condensation (BEC) phase transition, characterized by a macroscopic occupation of a single quantum state (cf. Sections \ref{sec:1pdm} and \ref{sec:idealBoseGas} for a brief reminder). For interacting systems, BEC has been observed in alkali gases in experiments led by Ketterle \cite{Ketterle-95} and by Cornell and Wieman \cite{CorWie-95}, but its theoretical understanding  remains challenging. Mathematically, it has been shown to occur for simplified models, in particular in the so-called mean-field scaling. 
The main result of this paper shows that the BEC phase transition in the mean-field Bose gas is indeed equivalent to the breaking of the $U(1)$ symmetry of the underlying many-body system.

While the definition of Bose-Einstein condensation (cf. \eqref{eq:definitionBEC}) is rather simple, a 
rigorous description of symmetry breaking is a bit more subtle. In order to explain it, let us focus on a translation-invariant system. In this case, the macroscopically occupied single particle quantum state in the condensed phase corresponds to the zero momentum mode, i.e. the expected number of particles with zero momentum   $\langle a^*_0 a_0 \rangle$ (cf. \eqref{eq:annihilationOperator} and \eqref{eq:creationOperator} for a definition of creation and annihilation operators) becomes proportional to the total number of particles in the system. This makes the zero momentum mode the most relevant one and, as suggested by   Bogoliubov \cite{Bogoliubov-47}, justifies to treat them classically by replacing the operators $a_0$ and $a_0^*$ by $c$-numbers and thus equating $\langle a^*_0 a_0 \rangle$ and $|\langle a_0 \rangle|^2$. As the latter quantity cannot be non-zero in any state that has a fixed number of particles (or more generally in any state that preserves the total number of particles), it serves as a marker of spontaneous symmetry breaking. 

In order to mathematically implement those ideas, Bogoliubov \cite{Bogoliubov-1961,Bogoliubov-1970} designed a limiting procedure that nowadays goes under the name of \textit{Bogoliubov's quasi averages} and has found applications in various fields of physics. It consists of the following two steps. First, in  order to make the expectation value of $a_0$ possibly non-zero, one adds to the   many-body Hamiltonian a perturbation $\lambda \sqrt{V}(a_0^*+a_0)$ that breaks particle number conservation (here $V$ is the volume of the system). Let us denote by $\langle \cdot \rangle_\lambda$ the expectation value in such a perturbed state.  One then considers the double limit 
$$\lim_{\lambda\to 0} \lim_{V\to \infty} \frac{|\langle a_0 \rangle_\lambda|^2}{V}$$
and says that a non-zero value means that the symmetry is broken in the system. We stress the order of limits, which means that one first takes the macroscopic limit and only later 'removes' the symmetry breaking perturbation. Here, we also assume the macroscopic limit to be the \textit{thermodynamic limit}, as originally considered by Bogoliubov.  

Note that  the Cauchy-Schwarz inequality implies that 
$$ \frac{|\langle a_0 \rangle_\lambda|^2}{V}\leq \frac{\langle a^*_0 a_0 \rangle_\lambda}{V}$$
and thus spontaneous symmetry breaking implies BEC in the sense of a macrosopic occupation of the zero momentum mode. Note, however, the dependence of the right-hand side on $\lambda$. In 2005  Lieb, Seiringer and Yngvason in \cite{LieSeiYng-05}, and, independently, S\"{u}t\H{o} in \cite{Suto-05} proved that in the limit $V \to \infty$ there is equality in the above equation for almost every $\lambda$.
Those works were inspired by the result of Ginibre \cite{Ginibre-1968}, which gave the first rigorous justification of the $c$-number substitution. In \cite{LieSeiYng-05} the authors have also shown that if there is BEC in the usual sense, then there is spontaneous symmetry breaking, i.e.
\begin{equation} \label{eq:BECimpliesSSB}
\lim_{V\to \infty} \frac{\langle a^*_0 a_0 \rangle_{\lambda=0}}{V}\leq \lim_{\lambda\to 0} \lim_{V\to \infty} \frac{|\langle a_0 \rangle_\lambda|^2}{V}. 
\end{equation}
A different approach that uses operator algebraic tools and correlation inequalities can be found in \cite{FannesPuleVerbeure1982}. See also \cite{PuleVerbeureZagrebnov2005} for a generalization to nonhomogeneous systems. Note that all these statements are \textit{conditional} in the sense that proving BEC \textit{in the thermodynamic limit} remains a major open problem in mathematical physics.

While the problem in the thermodynamic limit remains open, one can ask the same question in other scaling regimes that are easier to handle. In the present paper, we are interested in the mean-field scaling, where $N$ particles in the unit torus interact via an interaction potential of the form $N^{-1}v(x - y)$. This model corresponds to the physical situation of a weakly interacting gas, where the interaction energy per particle is proportional to the spectral gap of the one-body kinetic operator. The zero-temperature properties of this model are rather well understood: the ground state exhibits BEC \cite{LewNamRou-14,NamNap-21,BosPetSei-21}, while the excited states are well described by Bogoliubov theory \cite{Seiringer-11,GreSei-13,DerNap-13,LewNamSerSol-15,NamSei-15}. We will focus on systems at positive temperatures that are comparable to the critical temperature of the BEC phase transition. In this case, an exhaustive description of the Gibbs state of the (translation-invariant) mean-field Bose gas has been given in \cite{DeuNamNap-25}; see also \cite[Section~1.6]{DeuSei-21} for a discussion of the accuracy of Hartree theory in the trapped system. However, symmetry breaking in the spirit of (1.1) has not yet been explained in great detail, and fulfilling this task is the main purpose of our paper.

The remaining part of the introduction will be devoted to a detailed description of the model and a precise formulation of the main results.

\subsubsection*{Notation} 
We write $a \lesssim b$ to say that there exists a constant $C>0$ independent of the relevant parameters (for instance, the particle number or  the inverse temperature) such that $a \leq C b$ holds. If $a \lesssim b$ and $b \lesssim a$ we write $a \sim b$, and $a \simeq b$ means that the leading orders of $a$ and $b$ are equal in the limit considered. In case the constant depends on a parameter $k$, we write $a \lesssim_k b$ and $a \sim_k b$.

\subsection{Bosonic Fock space and Hamiltonian} 
We consider a system of bosonic particles confined to the three-dimensional flat torus $\Lambda=[0,1]^3$. The one-particle Hilbert space of the system is given by $L^2(\Lambda)$. We are interested in the grand canonical ensemble, that is, in systems with a fluctuating particle number. The Hilbert space of the entire system is therefore given by the bosonic Fock space over $L^2(\Lambda)$: 
\begin{equation}
	\mathscr{F}(L^2(\Lambda)) = \bigoplus_{n=0}^{\infty} L^2_{\mathrm{sym}}(\Lambda^n) = \mathbb{C} \oplus L^2(\Lambda) \oplus L^2_{\mathrm{sym}}(\Lambda^2) \oplus \cdots 
    \label{eq:FockSpace}
\end{equation}
Here, $L^2_{\mathrm{sym}}(\Lambda^n)$ denotes the closed linear subspace of $L^2(\Lambda^n)$ consisting of those functions $\Psi(x_1,...,x_n)$ that are invariant under any permutation of the particle coordinates $x_1, ..., x_n \in \Lambda$. 

On the $n$-particle Hilbert space $L^2_{\mathrm{sym}}(\Lambda^n)$ we define the Hamiltonian
\begin{equation}
    \mathcal{H}^n_\eta = \sum_{i=1}^n - \Delta_i + \frac{1}{\eta} \sum_{1 \leq i < j \leq n} v(x_i-x_j).
    \label{eq:nParticleHamiltonian}
\end{equation}
Here $\Delta_i$ denotes the Laplace operator on $L^2(\Lambda)$ with periodic boundary conditions acting on the coordinate $x_i$ and represents the kinetic energy of the particles. Moreover, $v : \Lambda \to \mathbb{R}$ describes the interaction between the particles and the coupling constant $\eta^{-1}>0$ stands for the strength of the interaction. We will assume that $v$ is bounded, and hence $\mathcal{H}^n_\eta$ is a self-adjoint operator on the domain $H^2_{\mathrm{sym}}(\Lambda^n)$ of the non-interacting Hamiltonian. The Hamiltonian of the entire system acts on a suitable dense subset of $\mathscr{F}(L^2(\Lambda))$ and is defined by
\begin{equation}
    \mathcal{H}_{\eta} = \bigoplus_{n=0}^{\infty} \mathcal{H}^n_{ \eta}. 
    \label{eq:FockSpaceHamiltonian1}
\end{equation}

To state an alternative representation of this Hamiltonian, we introduce the creation and annihilation operators $a^*_p$ and $a_p$ of a particle in the one-particle function $\varphi_p(x) = e^{\mathrm{i}p \cdot x}$ with $p \in \Lambda^* = 2 \pi \mathbb{Z}^3$. The annihilation operator $a_p$ is a map from $L^2_{\mathrm{sym}}(\Lambda^n)$ to $L^2_{\mathrm{sym}}(\Lambda^{n-1})$ for $n \geq 1$ that satisfies $a_p \Omega = 0$ for $\Omega = (1,0,0,...) \in \mathscr{F}$. Its action on an n-particle function $\psi \in L^2_{\mathrm{sym}}(\Lambda^n)$ is defined by
\begin{equation}
    (a_p \psi)(x_1,...,x_{n-1}) = \sqrt{n} \int_{\Lambda} e^{-\mathrm{i}p \cdot x} \psi(x_1,...,x_{n-1},x) \de x,
    \label{eq:annihilationOperator}
\end{equation}
and for general $\psi \in \mathscr{F}$ this action is extended by linearity. The creation operator $a^*_p$ is the adjoint of $a_p$ and acts on $\psi \in L^2_{\mathrm{sym}}(\Lambda^n)$ as
\begin{equation} \label{eq:creationOperator}
    (a^*_p \psi)(x_1,...,x_{n+1}) = \frac{1}{\sqrt{n+1}} \sum_{j=1}^n e^{\mathrm{i} p \cdot x_j} \psi(x_1,...,x_{j-1}, x_{j+1}, ..., x_n). 
\end{equation}
The above family of operators satisfies the canonical commutation relations
\begin{equation}
	[a_p,a_q^*] = \delta_{p,q}, \quad \quad [a_p,a_q] = 0 = [a^*_p,a^*_q].
	\label{eq:CCR}
\end{equation}
An alternative representation of the  Hamiltonian in \eqref{eq:FockSpaceHamiltonian1} in terms of these operators is given by
\begin{equation}
	\mathcal{H}_\eta = \sum_{p \in \Lambda^*} p^2 a^*_p a_p + \frac{1}{2 { \eta}} \sum_{p,u,v \in \Lambda^*} \hat{v}(p) a_{u+p}^* a_{v-p}^* a_u a_v 
	\label{eq:Hamiltonian}
\end{equation}
with the Fourier coefficients $\hat{v}(p) = \int_{\Lambda} v(x) e^{-\mathrm{i}p \cdot x} \de x$ of the interaction potential $v$.  
\subsection{The grand canonical ensemble}
\label{sec:freeenergy}
We are interested in a bosonic many-particle system described by the grand canonical ensemble. The usual parameters of this ensemble\footnote{A third parameter is the volume of the torus. Since there is a free parameter in our model, we decided to set it equal to one.} are the inverse temperature $\beta > 0$ and the chemical potential $\mu \in \mathbb{R}$. The choice of the latter allows one to obtain any desired value for the expected number of particles $N(\beta,
\mu)> 0$ in the system. This, in particular, allows us to ensure the interpretation of the mean-field scaling as in the canonical setup. 

The set of states on the bosonic Fock space is given by
\begin{equation}
    \mathcal{S} = \left\{ \Gamma \in \mathcal{B}(\mathscr{F}) \ | \ 0 \leq \Gamma, \tr \Gamma = 1 \right\},
	\label{eq:states}
\end{equation}
where $\mathcal{B}(\mathscr{F})$ denotes the set of bounded operators on $\mathscr{F}(L^2(\Lambda))$. For a state $\Gamma \in \mathcal{S}$ the grand potential functional $\mathcal{G}$ is defined by
\begin{equation}
    \mathcal{G}(\Gamma) = \tr[(\mathcal{H}_{\eta} - \mu \mathcal{N} ) \Gamma ] - \frac{1}{\beta} S(\Gamma) \quad \text{ with the von-Neumann entropy } \quad S(\Gamma) = - \tr[\Gamma \ln(\Gamma)].
    \label{eq:grandPotentialFunctional}
\end{equation}
Here $\mu \in \mathbb{R}$ and 
\begin{equation}
    \mathcal{N} = \bigoplus_{n=0}^{\infty} n = \sum_{p \in \Lambda^*} a^*_p a_p
    \label{eq:numberOperator}
\end{equation}
denote the chemical potential and the number operator, respectively. Its minimum equals the grand potential
\begin{equation}
    \Phi(\beta,\mu) = \min_{\Gamma \in \mathcal{S}} \mathcal{G}(\Gamma) = -\frac{1}{\beta} \ln\left( \tr \exp(-\beta(\mathcal{H}_{\eta} - \mu \mathcal{N})) \right)
    \label{eq:grantPotential}
\end{equation}
and the unique minimizer of $\mathcal{G}$ in the set $\mathcal{S}$ is the grand canonical Gibbs state
\begin{equation}
	G_{\beta,\mu} = \frac{\exp\left( -\beta (\mathcal{H}_{\eta} - \mu \mathcal{N}) \right) }{\tr \exp\left( -\beta (\mathcal{H}_{\eta} - \mu \mathcal{N}) \right) }.
	\label{eq:interactingGibbsstate}
\end{equation}
The expected number of particles in the Gibbs state $G_{\beta,\mu}$ will be denoted by $N(\beta,\mu)$, that is,
\begin{equation}\label{eq:expnumberparticlesGibbs}
    N(\beta,\mu)= \tr [\cN G_{\beta,\mu}].
\end{equation}
This number will later be shown (cf. Proposition \ref{prop:expnumberparticles}) to be proportional to $\eta$. Thus the choice of the coupling constant $\eta^{-1}$ in front of the interaction implements a mean-field scaling. 

\subsection{One-particle reduced density matrix and Bose--Einstein condensation} \label{sec:1pdm}
 Let $\Gamma \in \mathcal{S}$ be a state for which $\mathcal{N} \Gamma$ is trace class. We define the one-particle density matrix (1-pdm) $\gamma_{\Gamma} \in \mathcal{B}(L^2(\Lambda))$ of $\Gamma$ via its integral kernel in Fourier space by
 \begin{equation}
     \gamma_{\Gamma}(p,q) = \Tr[a_q^* a_p \Gamma].
     \label{eq:1pdm}
 \end{equation}
 It is a nonnegative trace-class operator, which satisfies $\trs[\gamma_{\Gamma}] = \Tr[\mathcal{N} \Gamma]$.
  
 Following \cite{PenOns-56} we say that a sequence of states $\Gamma_N \in \mathcal{S}$ with $\Tr[\mathcal{N} \Gamma_N] = N$ displays Bose--Einstein condensation (BEC) iff
\begin{equation}
    \liminf_{N \to \infty} \sup_{\Vert \psi \Vert = 1} \frac{\langle \psi, \gamma_{\Gamma_N} \psi \rangle}{N} > 0,
    \label{eq:definitionBEC}
\end{equation}
that is, iff the largest eigenvalue of $\gamma_{\Gamma_N}$ growths proportionally to $N$. The largest eigenvalue of $\gamma_{\Gamma_N}$ divided by $N$ and the corresponding eigenvector are called the condensate fraction and the condensate wave function, respectively. 

Let us mention that, apart from the results cited in the introduction (for the mean-field scaling), BEC has been shown to hold also in other scaling regimes. We refer to \cite{LieSei-02,NamRouSei-16,BocBreCenSch-18,BocBreCenSch-20,NamNapRicTri-21,AdhBreSch-21,Hainzl-21,BreBroCarOld-24,Fournais-21} for the ground state and to \cite{DeuSeiYng-19,DeuSei-20,DeuSei-21} for the Gibbs state. 

\subsection{The ideal Bose gas on the torus} \label{sec:idealBoseGas}
  If one sets $v = 0$ in the definitions of the Hamiltonian in \eqref{eq:Hamiltonian} and the Gibbs state in \eqref{eq:interactingGibbsstate} we obtain a non-interacting model called the ideal Bose gas, which is exactly solvable. 

The $1$-pdm of the Gibbs state of the ideal Bose gas is given by
\begin{equation}
    \gamma^{\mathrm{id}} = \sum_{p \in \Lambda^*} \frac{1}{\exp(\beta \left( p^2 -  \mu_0 ) \right) - 1} | \varphi_p \rangle \langle \varphi_p |,
    \label{eq:1pdmIdealGas}
\end{equation}
where $| \varphi_p \rangle \langle \varphi_p |$ denotes the orthogonal projection onto the one-particle function $\varphi_p(x) = e^{\mathrm{i} p \cdot x}$. Accordingly, the chemical potential $\mu_0(\beta,N) < 0$ can be defined as the unique solution to the equation
\begin{equation}
	N =  \trs_{L^2(\Lambda)} [ \gamma^{\mathrm{id}} ] = \sum_{p \in \Lambda^*} \frac{1}{\exp(\beta \left( p^2 -  \mu_0 ) \right) - 1},
	\label{eq:idealgase1pdmchempot}
\end{equation} 
where $N$ denotes the expected number of particles in the system.

As has been realized by Einstein in \cite{Einstein1925}, the ideal Bose gas displays a phase transition (the BEC phase transition) in the large $N$ limit: the largest eigenvalue $N_0(\beta,\mu_0)$ of $\gamma^{\mathrm{id}}$ in \eqref{eq:1pdmIdealGas} behaves as
\begin{equation}
	N_0(\beta,\mu_0(\beta,N)) = \frac{1}{\exp(-\beta \mu_0)-1} \simeq N \left[ 1 - \left( \frac{\beta_{\mathrm{c}}}{\beta} \right)^{3/2} \right]_+, \quad \text{ where } \quad \beta_{\mathrm{c}} = \beta_{\mathrm{c}}(N) = \frac{1}{4 \pi} \left( \frac{N}{\upzeta(3/2)} \right)^{-2/3}
	\label{eq:crittemp}
\end{equation}
in the limit $N \to \infty$. Here, $\upzeta$ denotes the Riemann zeta function and $[x]_+ = \max\{ 0,x \}$. The function $N_0(\beta,\mu_0)$ is also the expected number of particles in the system with momentum equal to zero. We highlight that, because of the $N$-dependence of $\beta_{\mathrm{c}}$, $\beta$ usually depends on the particle number\footnote{In a box with side length $L > 0$ the inverse critical temperature of the BEC phase transition in the ideal gas is given by $\beta'_{\mathrm{c}} = \frac{1}{4 \pi} \left( \frac{\varrho}{\upzeta(3/2)} \right)^{-2/3}$ with the density $\varrho = N/L^3$. The inverse critical temperature in \eqref{eq:crittemp} depends on $N$ because we have chosen $L=1$.}. If we choose $\beta$ so that $\lim_{N \to \infty} \beta_{\mathrm{c}}/\beta = 0$ (zero temperature limit), then $N_0(\beta,\mu_0(\beta,N))/N \to 1$. However, in general, it is possible to have $N_0(\beta,\mu_0(\beta,N))/N \to g \in [0,1]$, that is, one can have a macroscopic number of particles (order $N$) with $p = 0$ (the BEC) and a macrosopic number of thermally excited particles. Finally, \eqref{eq:crittemp} implies the following behavior: if $\beta = \kappa \beta_{\mathrm{c}}$ with $\kappa > 1$ we have $\mu_0 \simeq -(\beta N_0)^{-1} \sim -N^{-1/3}$, while $\mu_0 \sim -\beta^{-1} \sim -N^{2/3}$ holds for $\beta = \kappa \beta_{\mathrm{c}}$ with $\kappa < 1$.

The grand potential of the ideal gas is
\begin{equation}
    \Phi^{\mathrm{id}} (\beta,\mu_0) = -\frac{1}{\beta} \ln\left( \tr \exp(-\beta \de \Upsilon(-\Delta - \mu_0)) \right) = \frac{1}{\beta} \sum_{p \in \Lambda^*} \ln\left( 1 - \exp(-\beta(p^2 - \mu_0)) \right).
    \label{eq:grandPotentialIdealGas}
\end{equation}
For later reference we also introduce
\begin{equation}\
    \Phi_+^{\mathrm{id}} (\beta,\mu_0) = \frac{1}{\beta} \sum_{p \in \Lambda_+^*} \ln\left( 1 - \exp(-\beta(p^2 - \mu_0)) \right).
    \label{eq:grandPotentialIdealGasExcited}
\end{equation}

\subsection{Main results}
Recall that $N(\beta,\mu)$ defined in \eqref{eq:expnumberparticlesGibbs} is the expected number of particles in the unperturbed Gibbs state $G_{\beta,\mu}$. As discussed in the introduction, in order to describe $U(1)$  spontaneous symmetry breaking,  we define the Hamiltonian 
\begin{equation}
    \mathcal{H}^{\lambda}_{\eta} = \mathcal{H}_{\eta} + \lambda N(\beta,\mu)^{1/2} (a_0 + a_0^*)
    \label{eq:HamiltonianWithSymmetryBreakingPerturb}
\end{equation}
with a coupling parameter $\lambda\in \mathbb{R}$ in front of the perturbation. Note that we replaced the square root of the volume by that of the particle number because our volume is fixed. The Hamiltonian $\mathcal{H}_\eta$ is defined in \eqref{eq:Hamiltonian}. We restrict ourselves to the case of real $\lambda$ for the sake of simplicity and without loss of generality. For $\lambda \in \mathbb{C}$ the perturbation term needs to be replaced by $N(\beta,\mu)^{1/2} (\overline{\lambda} a_0 + \lambda a_0^*)$. 

  Our first result shows that $N(\beta,\mu)$ grows proportionally to $\eta$. Before we state it, we define an effective chemical potential $\widetilde{\mu} < 0$ given as the solution of the  equation   
\begin{equation}
        \sum_{p \in \Lambda^*} \frac{1}{e^{\beta(p^2 - \widetilde{\mu})}-1} = \frac{(\mu - \widetilde{\mu})\eta}{\hat{v}(0)},
        \label{eq:GrantCanonicalEffectiveIddealGasChemPot}
    \end{equation}
where $\beta,\eta, \hat{v}(0)>0$ and $\mu\in \mathbb{R}$. As proven in Appendix \ref{app:effectiveChemicalPotential}, there is a unique solution $\widetilde{\mu}$ in the set $(-\infty,0)$. Furthermore, under the assumption $-\eta^{2/3} \lesssim \mu \lesssim 1$ and $\beta = \kappa \beta_{\mathrm{c}}(\eta)$, in the limit $\eta \to \infty$, there exists a constant $c>0$ such that $\widetilde{\mu}$ satisfies 
    \begin{equation}
        c \leq \mu - \widetilde{\mu} \leq c^{-1}. 
        \label{eq:muMinusMu0Bound1}
    \end{equation}
As our main result in Theorem~\ref{thm:SSBCondensate} below shows, the condition on $\mu$ allows us to describe a system in the condensed and the non-condensed phases. It can also be motivated by the bounds for the chemical potential in \cite[Lemma~6.1]{DeuNamNap-25}. See also Remark~6.1 in the same reference for a discussion.

\begin{proposition}[Expected number of particles]\label{prop:expnumberparticles}
   Let the interaction potential $v : \Lambda \to \mathbb{R}$ be a periodic function, whose Fourier coefficients satisfy $0 \leq \hat{v} \in \ell^1(\Lambda^*)$ and $\hat{v}(0) > 0$. We consider the limit $\eta \to \infty$, $\beta \sim \eta^{-2/3}$.  If $-\eta^{2/3} \lesssim \mu \lesssim 1$, then 
   \begin{equation}\label{eq:particlesnumberasympt}
       N(\beta,\mu)=\frac{(\mu - \widetilde{\mu})\eta}{\hat{v}(0)}+O\big(\eta^{5/6}\sqrt{\ln\eta}\big).
   \end{equation}
\end{proposition}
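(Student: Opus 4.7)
The quantity $N^* := (\mu-\widetilde{\mu})\eta/\hat v(0)$ on the right-hand side of \eqref{eq:particlesnumberasympt} is, by the defining equation \eqref{eq:GrantCanonicalEffectiveIddealGasChemPot}, exactly the expected number of particles in the ideal Bose gas at chemical potential $\widetilde{\mu} = \mu - \hat v(0)N^*/\eta$. This is nothing but the standard Hartree self-consistency: an individual particle sees the bare chemical potential shifted by the mean-field repulsion. The plan is to establish the corresponding asymptotic expansion of the grand potential and then to pass from the free energy to the particle number by concavity in $\mu$.

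The starting point is the algebraic identity
\begin{equation*}
\mathcal{H}_\eta - \mu\mathcal{N} \;=\; (H_0 - \widetilde{\mu}\,\mathcal{N}) \;+\; \frac{\hat v(0)}{2\eta}(\mathcal{N}-N^*)^2 \;-\; \frac{\hat v(0)(N^*)^2}{2\eta} \;-\; \frac{\hat v(0)}{2\eta}\mathcal{N} \;+\; V_{\neq 0},
\end{equation*}
where $H_0 := \sum_{p\in\Lambda^*} p^2 a_p^* a_p$ and $V_{\neq 0}$ collects the interaction terms with nonzero momentum transfer. The definition of $\widetilde{\mu}$ is designed precisely to kill the linear-in-$\mathcal{N}$ contribution coming from the $p=0$ part of the interaction, leaving a nonnegative quadratic fluctuation, a $c$-number constant, and a subleading linear piece.

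The main analytic step is to establish a two-sided bound
\begin{equation*}
\Phi(\beta,\mu) \;=\; \Phi^{\mathrm{id}}(\beta,\widetilde{\mu}) \;-\; \frac{\hat v(0)(N^*)^2}{2\eta} \;+\; O(\varepsilon)
\end{equation*}
uniformly for $\mu$ in a small interval around the reference value, via two applications of the Bogoliubov--Peierls (Gibbs variational) inequality. For the upper bound I would use a trial state of the form $\mathcal{W}(\alpha)\,G^{\mathrm{id},\perp}_{\beta,\widetilde{\mu}}\,\mathcal{W}(\alpha)^*$, where $\mathcal{W}(\alpha)$ is the Weyl operator implementing a $c$-number substitution $a_0 \mapsto a_0 + \alpha$ with $|\alpha|^2$ tuned to the ideal condensate occupation; this avoids the pathological $p=0$ fluctuations that would spoil a naive trial $G^{\mathrm{id}}_{\beta,\widetilde{\mu}}$ in the condensed regime (where $\sum_p n_p(n_p+1)$ is much larger than $\eta$). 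The off-diagonal term $V_{\neq 0}$ is controlled using $\hat v \in \ell^1(\Lambda^*)$ together with Wick's theorem on the trial state. For the matching lower bound I would plug $G_{\beta,\mu}$ into the ideal-gas functional and use the a priori control on the number operator, the excited-mode occupations, and the fluctuation of $(\mathcal{N}-N^*)^2$ in the true Gibbs state supplied by \cite{DeuNamNap-25}.

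Finally, I would combine Hellmann--Feynman, $N(\beta,\mu) = -\partial_\mu\Phi(\beta,\mu)$, with the concavity of $\mu \mapsto \Phi(\beta,\mu)$, which follows directly from the Gibbs variational principle. Differentiating the candidate $\Phi_0(\mu) := \Phi^{\mathrm{id}}(\beta,\widetilde{\mu}(\mu)) - \hat v(0)N^*(\mu)^2/(2\eta)$ through the implicit equation \eqref{eq:GrantCanonicalEffectiveIddealGasChemPot} produces exact cancellations that leave $\partial_\mu\Phi_0 = -N^*(\mu)$. A Griffiths-type convexity argument then upgrades a uniform bound $|\Phi-\Phi_0| \leq \varepsilon$ to $|N(\beta,\mu) - N^*| \leq 2\sqrt{\varepsilon\,\sup|\partial_\mu N^*|}$; using that $\sup|\partial_\mu N^*|$ does not exceed $\eta/\hat v(0)$ and calibrating $\varepsilon$ accordingly produces the claimed error $O(\eta^{5/6}\sqrt{\ln\eta})$. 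I expect the principal difficulty to be the uniform control of the fluctuations of $\mathcal{N}$ in the interacting Gibbs state across the BEC threshold, where the effective description of the $p=0$ mode changes character; the detailed Gibbs-state structure from \cite{DeuNamNap-25} is exactly what is needed to close this gap.
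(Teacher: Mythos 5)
Your overall architecture is sound and your final step is a genuinely different route from the paper's. The paper does \emph{not} differentiate in $\mu$: its lower bound (Proposition \ref{prop:grandpotlowerbound}) keeps the nonnegative fluctuation term $\frac{\hat v(0)}{2\eta}\tr[(\mathcal{N}-N^*)^2\,\Gamma]$ explicitly on the right-hand side, so that subtracting the upper bound (Proposition \ref{prop:grandpotupperbound}) immediately yields $\tr[(\mathcal{N}-N^*)^2 G_{\beta,\mu}]\lesssim \eta^{5/3}\ln\eta$, and Cauchy--Schwarz then gives $|N(\beta,\mu)-N^*|\lesssim\eta^{5/6}\sqrt{\ln\eta}$ at a single fixed $\mu$. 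Your Griffiths/Hellmann--Feynman argument reaches the same error: with $\varepsilon\sim\eta^{2/3}\ln\eta$ and $\sup|\partial_\mu N^*|\le\eta/\hat v(0)$ (which holds since implicit differentiation of \eqref{eq:GrantCanonicalEffectiveIddealGasChemPot} gives $\partial_\mu\widetilde\mu\in(0,1)$), $\sqrt{\varepsilon\,\eta}\sim\eta^{5/6}\sqrt{\ln\eta}$, and your cancellation $\partial_\mu\Phi_0=-N^*$ is correct. The trade-off is that your route requires the two-sided free-energy bound \emph{uniformly} for $\mu$ in a window of width $h\sim\eta^{-1/6}\sqrt{\ln\eta}$, whereas the paper's variance argument is pointwise in $\mu$ and additionally delivers a second-moment bound for free.

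The one step you should repair is the lower bound. Invoking ``the fluctuation of $(\mathcal{N}-N^*)^2$ in the true Gibbs state supplied by \cite{DeuNamNap-25}'' is both unnecessary and dangerously close to circular: that fluctuation bound is exactly what the matching upper and lower bounds are supposed to \emph{produce}. No a priori input on $G_{\beta,\mu}$ is needed. Your own decomposition already contains the correct mechanism: by Onsager's lemma (positivity of $\hat v$), $V_{\neq 0}\ge -\frac{v(0)-\hat v(0)}{2\eta}\mathcal{N}$ as an operator inequality, the quadratic term $\frac{\hat v(0)}{2\eta}(\mathcal{N}-N^*)^2$ is nonnegative (drop it, or better, keep it), and the remaining one-body part $H_0-\widetilde\mu\mathcal{N}$ is handled by the Gibbs variational principle for the ideal gas at chemical potential $\widetilde\mu$. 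This gives the lower bound for \emph{every} state $\Gamma$, with error $O(\eta^{2/3}\ln\eta)$ coming only from the $p=0$ term of $\Phi^{\mathrm{id}}$, and closes the argument without any external input.
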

\begin{remark}
 Proposition \ref{prop:expnumberparticles} explains why we call the scaling of the interaction in the Hamiltonian \eqref{eq:Hamiltonian} mean-field. Indeed, in the canonical picture the mean-field scaling corresponds to a coupling constant that is inversely proportional to the number of particles in the system.     
\end{remark}

In Proposition \ref{prop:expnumberparticles} we only assumed that $\beta \sim \eta^{-2/3}$. In order to set a reference temperature in the spirit of \eqref{eq:crittemp}, let us consider the question of condensation in the self-consistent ideal gas, i.e. the one described by \eqref{eq:GrantCanonicalEffectiveIddealGasChemPot}. To this end notice that whenever $\mu\leq 0$ there is no condensation. Indeed, it follows from \eqref{eq:muMinusMu0Bound1} that in this situation $\widetilde{\mu} \leq \mu -c \leq -c$ for some positive constant $c$. Then
\begin{equation*}
    N_0(\beta,\widetilde{\mu})\coloneqq \frac{1}{e^{-\beta \widetilde{\mu}}-1} \leq \frac{1}{e^{\beta c}-1} \leq (\beta c)^{-1}\ll \eta
\end{equation*}
and there is no condensation. The same argument applies if $0\leq \mu=o(1)$. Thus condensation in the self-consistent ideal gas can only occur when $\mu \sim 1$. To compute the inverse critical temperature, we neglect the term in the sum on the left-hand side of \eqref{eq:GrantCanonicalEffectiveIddealGasChemPot} corresponding to $p=0$ and set $\widetilde{\mu}=0$. When we also approximate the remaining sum by an integral we find
$$
    \frac{1}{(2\pi)^3} \int_{\mathbb{R}^3} \frac{1}{e^{\beta p^2}-1} \de p \simeq \frac{\mu \eta}{\hat{v}(0)}.
$$
The above equation provides a relation between $\beta$ and $\mu$ that defines the critical point. Assume that we now increase $\mu$ by a constant of order one or increase $\beta$ on the scale $\eta^{-2/3}$. In this case we have to insert back the term $N_0(\beta,\widetilde{\mu}) \sim \eta$ on the left-hand side to compensate for this change. We also have to insert back $\widetilde{\mu} < 0$. However, the leading order behavior of the integral on the left-hand side and that of the right-hand side are not affected when we insert back $\widetilde{\mu} \sim -\eta^{-1/3}$ as in \eqref{eq:GrantCanonicalEffectiveIddealGasChemPot}. We conclude that an increase of $\mu$ or $\beta$ in the described way leads to condensation. This discussion motivates the following definition.

\begin{definition}
For $\mu \in \mathbb{R}$ and $\eta, \hat{v}(0) > 0$ we define 
\begin{equation} \label{def:crtitempselfconsistent}
\beta_{\mathrm{c}}(\mu,\eta):=\begin{cases}
    \dfrac{1}{4 \pi} \left( \dfrac{\mu \, \eta}{\hat{v}(0) \upzeta(3/2)} \right)^{-2/3} \quad &\text{if}\quad \mu > 0,\\
    +\infty \quad &\text{if} \quad \mu \leq 0.
\end{cases}
\end{equation}
\end{definition}
If we assume that $\beta \sim \eta^{-2/3}$, $-\eta^{2/3} \lesssim \mu \lesssim 1$, and that the limit $\lim_{\eta \to \infty} \beta/\beta_{\mathrm{c}}(\mu,\eta) = \kappa \in [0,\infty)$ exists then we have
\begin{equation}
    \lim_{\eta \to \infty}\frac{N_0(\beta,\widetilde{\mu})}{N(\beta,\mu)}= \left[1-\frac{1}{\kappa^{3/2}}\right]_+.
    \label{eq:condensateFractionEffectiveTheory}    
\end{equation}

Let's get back to symmetry breaking. Recall the perturbed Hamiltonian \eqref{eq:HamiltonianWithSymmetryBreakingPerturb}. The corresponding Gibbs state is given by
\begin{equation}
    G_{\beta,\mu}^{\lambda} = \frac{\exp\left( -\beta (\mathcal{H}^{\lambda}_{\eta} - \mu  \mathcal{N}) \right) }{\tr \exp\left( -\beta (\mathcal{H}^{\lambda}_{\eta} - \mu \mathcal{N}) \right) }.
    \label{eq:perturbedGibbsstate}
\end{equation}
 Note also that while $[\mathcal{N},G_{\beta,\mu}] = 0$, the perturbed Gibbs state does not commute with $\mathcal{N}$ if $\lambda \neq 0$.

Our main result is captured in the following theorem.
\begin{theorem}
\label{thm:SSBCondensate}
Let the interaction potential $v : \Lambda \to \mathbb{R}$ be a periodic function, whose Fourier coefficients satisfy $0 \leq \hat{v} \in \ell^1(\Lambda^*)$ and $\hat{v}(0) > 0$. We consider the limit $ \eta \to \infty$, $\beta \sim \eta^{-2/3}$, $-\eta^{2/3} \lesssim \mu \lesssim 1$ and assume that the limit $\lim_{\eta \to \infty} \beta/\beta_{\mathrm{c}}(\mu,\eta) = \kappa \in [0,\infty)$ exists with $\beta_{\mathrm{c}}(\mu,\eta)$ in \eqref{def:crtitempselfconsistent}. Then the following holds:
\begin{enumerate}[label=(\alph*)]
\item Let $\gamma_{\beta, \mu }$ be the 1-pdm of the unperturbed Gibbs state $G_{\beta, \mu}$ in \eqref{eq:interactingGibbsstate}. We have
\begin{equation} \label{eq:thmmfbec}
    \lim_{ \eta \to \infty} \sup_{\Vert \psi \Vert = 1} \frac{\langle \psi, \gamma_{\beta, \mu} \psi \rangle}{ N(\beta,\mu)} = \lim_{ \eta \to \infty} \frac{\Tr[a_0^* a_0 G_{\beta, \mu}]}{ N(\beta,\mu)} = \left[ 1 - \frac{1}{\kappa^{3/2}} \right]_+.
\end{equation}
That is, the unperturbed system displays a BEC phase transition (here we use the definition of BEC in \eqref{eq:definitionBEC}) with the inverse critical temperature $\beta_{\mathrm{c}}(\mu,\eta)$ in \eqref{def:crtitempselfconsistent} of the self-consistent ideal gas. 
\item For the perturbed Gibbs state $G^{\lambda}_{\beta, \mu}$, we have
\begin{equation}
    \lim_{\lambda \to 0} \lim_{\eta\to \infty} \frac{|\tr[ a_0 G^{\lambda}_{\beta, \mu} ]|}{  N(\beta,\mu)^{1/2}} = \sqrt{\left[ 1 - \frac{1}{\kappa^{3/2}} \right]_+}.
    \label{eq:SSBCondensate}
\end{equation}
That is, the $U(1)$-symmetry of the system is broken in the sense that the above Bogoliubov quasi-average has a nonzero limit iff the system displays BEC in the sense of \eqref{eq:definitionBEC}.
\item We have
\begin{equation}
    \lim_{\lambda \to 0} \lim_{ \eta \to \infty} \frac{|\tr[ a_0^* a_0 G^{\lambda}_{\beta,  \mu} ]|}{  N(\beta,\mu)} = \left[ 1 - \frac{1}{\kappa^{3/2}} \right]_+.
    \label{eq:continuityCondensateFraction}
\end{equation}
That is, the limit of the condensate fraction as $ \eta \to \infty$ is continuous at $\lambda = 0$. 
\end{enumerate}
\end{theorem}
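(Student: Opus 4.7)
The proof plan reduces to three connected pieces: part (a) is essentially a readout of \cite{DeuNamNap-25}, while parts (b) and (c) require extending the analysis there to the symmetry-breaking perturbation $\lambda N(\beta,\mu)^{1/2}(a_0 + a_0^*)$. For part (a), note first that since $\mathcal{H}_\eta$ commutes with the translations of $\Lambda$, so does $G_{\beta,\mu}$, and hence $\gamma_{\beta,\mu}$ is diagonal in the plane-wave basis $\{\varphi_p\}_{p\in\Lambda^*}$. This reduces the supremum in \eqref{eq:thmmfbec} to $\sup_{p\in\Lambda^*}\tr[a_p^* a_p\, G_{\beta,\mu}]$. The asymptotic description of the Gibbs state from \cite{DeuNamNap-25} shows that each occupation with $p\neq 0$ is well approximated by the ideal-gas occupation at the self-consistent chemical potential $\widetilde\mu$ from \eqref{eq:GrantCanonicalEffectiveIddealGasChemPot}, so the supremum is attained at $p=0$. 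Combining this with Proposition~\ref{prop:expnumberparticles} and the calculation leading to \eqref{eq:condensateFractionEffectiveTheory} then gives both equalities in \eqref{eq:thmmfbec}.

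For parts (b) and (c) I would extend the variational analysis of \cite{DeuNamNap-25} to the perturbed Hamiltonian $\mathcal{H}^\lambda_\eta$. The grand potential $\Phi_\lambda$ is concave in $\lambda$, and since $\mathcal{H}^\lambda_\eta$ has real matrix elements in the plane-wave basis, the Hellmann--Feynman identity reads
\begin{equation*}
\partial_\lambda \Phi_\lambda \;=\; 2\, N(\beta,\mu)^{1/2}\, \tr[a_0\, G^\lambda_{\beta,\mu}].
\end{equation*}
A $c$-number substitution $a_0 \rightsquigarrow z \in \mathbb{C}$ yields an effective free energy of the form $F_\lambda(z) = F_0(|z|^2) + 2\lambda N(\beta,\mu)^{1/2}\Re(z)$, with $F_0$ a $U(1)$-invariant base functional. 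To leading order in $\eta$, $F_0$ is minimized on the circle $|z|^2 = N(\beta,\mu)[1-\kappa^{-3/2}]_+$ in the condensed phase (and at $z=0$ in the non-condensed phase), and the perturbation selects the real minimizer $z^*(\lambda) \approx -\sgn(\lambda)\,N(\beta,\mu)^{1/2}\sqrt{[1-\kappa^{-3/2}]_+}$. One then reads off
\begin{equation*}
\lim_{\eta\to\infty}\frac{\tr[a_0\, G^\lambda_{\beta,\mu}]}{N(\beta,\mu)^{1/2}} \;=\; -\sgn(\lambda)\sqrt{[1-\kappa^{-3/2}]_+}, \qquad \lim_{\eta\to\infty}\frac{\tr[a_0^* a_0\, G^\lambda_{\beta,\mu}]}{N(\beta,\mu)} \;=\; [1-\kappa^{-3/2}]_+,
\end{equation*}
for each fixed $\lambda\neq 0$; in the non-condensed phase both limits vanish. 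Part (c) is then immediate, and part (b) follows by combining the first display with the Cauchy--Schwarz bound $|\tr[a_0\, G^\lambda_{\beta,\mu}]|^2 \leq \tr[a_0^* a_0\, G^\lambda_{\beta,\mu}]$ and part (c).

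The main obstacle is the rigorous justification of the $c$-number substitution for $\mathcal{H}^\lambda_\eta$ with $o(\eta)$ errors, uniformly for small $\lambda$. The upper bound on $\Phi_\lambda$ should follow by applying a Weyl displacement $\mathcal{D}(z^*(\lambda))$ to the quasi-free trial state used in \cite{DeuNamNap-25} and evaluating the resulting Gibbs functional. The matching lower bound requires first shifting $a_0 \rightsquigarrow a_0 + z^*(\lambda)$ in the Hamiltonian to cancel the linear perturbation, then applying the Bogoliubov rotation and energy/entropy bounds from \cite{DeuNamNap-25} to the resulting shifted Hamiltonian, and finally optimizing the outcome over $z^*$. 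Uniformity in $\lambda$, needed so that the pointwise limit in $\eta$ passes to the iterated limit $\lambda\to 0$ in part (c), is natural here because the perturbation enters the effective variational problem only through the classical variable $z$, and $F_\lambda$ has positive curvature at its minimizer with smooth dependence of $z^*(\lambda)$ on $\lambda$ on each side of the transition.
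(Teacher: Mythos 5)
Your high-level architecture for part (b) --- concavity of $\lambda\mapsto\Phi^{\lambda}$, a Griffith/Hellmann--Feynman bound for $\tr[(a_0+a_0^*)G^{\lambda}_{\beta,\mu}]$, a coherent trial state with the phase locked to $-\sgn(\lambda)$ for the upper bound, a $c$-number substitution for the lower bound, and finally Cauchy--Schwarz together with part (c) to control the modulus of $\tr[a_0G^{\lambda}_{\beta,\mu}]$ --- coincides with the paper's route. (Note that the derivative of $\Phi^{\lambda}$ only gives $2N(\beta,\mu)^{1/2}\Re\,\tr[a_0G^{\lambda}_{\beta,\mu}]$, not $\tr[a_0G^{\lambda}_{\beta,\mu}]$ itself; this is precisely why the combination with (c) is needed, as you correctly anticipate at the end.) The genuine gap is in how you obtain the occupation numbers $\tr[a_0^*a_0G^{\lambda}_{\beta,\mu}]$ in part (c) and $\tr[a_0^*a_0G_{\beta,\mu}]$ in the second equality of part (a): you simply ``read off'' these expectation values from the location of the minimizer of the effective free energy $F_\lambda(z)$. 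Knowing where an effective variational problem is minimized does not by itself yield an expectation value in the quantum Gibbs state; one needs a mechanism that converts free-energy asymptotics into expectation values, for instance a quantitative concentration estimate for the measure $\zeta_{G^{\lambda}}$ near the minimizing circle, which you do not supply. The paper's mechanism is to enlarge the family of perturbations to $\mathcal{H}^{\lambda,\delta}_{\eta}=\mathcal{H}^{\lambda}_{\eta}+\delta a_0^*a_0$, prove matching upper and lower bounds on $\Phi^{\lambda,\delta}$ with explicit errors of order $\eta(\delta^2+|\delta||\lambda|^{1/3}+|\lambda|^{4/3})$ plus $O(\eta^{2/3}\ln\eta)$ (Proposition~\ref{prop:grandpotperturbedupperlowerbound}), and then run a second Griffith argument in $\delta$, choosing $\delta=\eta^{-\alpha}$ for part (a) and $\delta=|\lambda|$ for part (c). Without this auxiliary $\delta$-perturbation (or an equivalent concentration argument) parts (a) and (c) are not established, and since your proof of (b) leans on (c), neither is (b).

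Two smaller points. First, your claim that for each \emph{fixed} $\lambda\neq0$ one has $\lim_{\eta\to\infty}\tr[a_0^*a_0G^{\lambda}_{\beta,\mu}]/N(\beta,\mu)=[1-\kappa^{-3/2}]_+$ is stronger than the method delivers: the effective problem carries $\lambda$-dependent errors (of order $|\lambda|^{1/3}$ and $|\lambda|^{4/3}/\delta$ after differentiation), so the identification only holds in the iterated limit $\lim_{\lambda\to0}\lim_{\eta\to\infty}$, which is all the theorem asserts; the ``uniformity in $\lambda$'' you invoke is exactly what must be, and in the paper is, made quantitative. Second, for the first equality in (a) you defer to \cite{DeuNamNap-25} for the $p\neq0$ occupation numbers; the paper instead gives a short self-contained argument, perturbing by $-\epsilon a_p^*a_p$ and comparing the resulting ideal-gas grand potentials to conclude $\tr[a_p^*a_pG_{\beta,\mu}]=O(\eta^{2/3}\ln\eta)=o(\eta)$. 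That step is elementary, but it should be carried out (or the cited statement matched precisely to the present grand-canonical setting) rather than asserted.
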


We have the following remarks concerning the above theorem.

\begin{remark}
\begin{itemize}
    \item As already mentioned below \eqref{eq:definitionBEC}, the BEC phase transition has been established for systems in more challenging scaling regimes. Point (a) in Theorem~\ref{thm:SSBCondensate} can be seen as a simpler version of parts of the results in \cite{DeuSei-21}. We state it here mainly to provide a formula for the condensate fraction in the interacting model. Note, however, that the particle number has been used as a thermodynamic variable in \cite{DeuSei-21}, while we fix the chemical potential. Point (b) is the main part of Theorem~\ref{thm:SSBCondensate}. It shows that the quasi-average of $a_0$ in the perturbed Gibbs state converges in the limit $\lim_{\lambda \to 0} \lim_{N \to \infty}$ to the square root of the condensate fraction. That is, point (b) yields an alternative definition of BEC. Finally, point (c) shows that the condensate fraction in the perturbed Gibbs state, converges, for $\lambda \to 0$, to that of the unperturbed Gibbs state, which could be used as a third definition of the condensate fraction. 
    \item One could also define BEC in translation-invariant systems via off-diagonal long range order of the integral kernel of the 1-pdm in position space. However, as has been shown in Yang \cite{Yang-1962}, this is equivalent to defining BEC via \eqref{eq:definitionBEC}.
    \item The statement in Theorem~\ref{thm:SSBCondensate} not only holds for the Gibbs state but also for approximate minimizers of the grand potential functional $\mathcal{G}$ in \eqref{eq:grandPotentialFunctional}. In fact, in order for Theorem \ref{thm:SSBCondensate} to hold for a state that is not necessarily a Gibbs state, one requires that the perturbed grand potential functional evaluated at those states satisfies the upper bounds in Section \ref{sec:boundsgrandPotential}. 
\end{itemize}
\end{remark}

\subsection{Organization of the article and strategy of proof}

The proofs of both main results, Proposition \ref{prop:expnumberparticles} and Theorem \ref{thm:SSBCondensate}, are based on a variational approach. 

In Section \ref{sec:expectednumberproof} we provide upper and lower bounds for the grand potential related to the unperturbed Hamiltonian. The proof of the upper bound relies on a trial state that treats the zero momentum mode as a coherent state and the excited particles as those of an ideal gas with an appropriately chosen chemical potential. The lower bound is a straightforward consequence of the Onsager Lemma~\ref{lem:OnsagersLemma}. The grand potential bounds are then used to prove Proposition~\ref{prop:expnumberparticles}. 

The proof of Theorem \ref{thm:SSBCondensate}  is divided into several steps. First in Section \ref{sec:boundsgrandPotential}, we provide upper and lower bounds for the grand potential of the perturbed Hamiltonian \eqref{eq:HamiltonianWithSymmetryBreakingPerturb}. In fact, apart from the the symmetry breaking perturbation, we need to consider also another perturbation of the form $\delta a_0^* a_0$. The upper bound uses a similar trial state as the one used for the non-perturbed Hamiltonian. However, now, the condensate part of the trial state has to take into account the right phase - as induced by the symmetry breaking term. 
The lower bound is more involved. While we can still use the Onsager Lemma as in the analysis of the unperturbed grand potential, the perturbed grand potential requires also a $c$-number substitution and entropic inequalities  in the spirit of \cite{DeuSei-20} (cf. Section \ref{ssec:cnumber}), leading to an effective minimization problem which can be solved approximately. 

The bounds on the perturbed grand potential are used in Section~\ref{sec:proofmain} to conclude the proof of Theorem \ref{thm:SSBCondensate}. Since the perturbed grand potential is a concave function of $\lambda$, one can apply a Griffith (Hellmann--Feynman) argument to obtain bounds on the quasi-averages by differentiating the grand potential with respect to $\lambda$. This, however, only yields bounds for the expectation of $a_0 + a_0^*$ in the perturbed Gibbs state. To conclude the desired statement in \eqref{eq:SSBCondensate}, one has to combine these estimates with \eqref{eq:continuityCondensateFraction}.

 Finally, Appendix~\ref{app:effectiveChemicalPotential} is devoted to the analysis of equation \eqref{eq:GrantCanonicalEffectiveIddealGasChemPot} and the derivation of the basic properties of the effective chemical potential. 
 
\section{Expected number of particles in the unperturbed system}
\label{sec:expectednumberproof}
The goal of this section is to prove Proposition \ref{prop:expnumberparticles}. The proof will follow from an upper and a lower bound on the (unperturbed) grand potential  \eqref{eq:grantPotential}. 

\subsection{Upper bound for the unperturbed grand potential}

We start with an upper bound. To this end recall the exponential property of the Fock space $\mathscr{F}$ over $L^2(\Lambda)$, which states the unitary equivalence $\mathscr{F} \cong \mathscr{F}_0 \otimes \mathscr{F}_+$, where $\mathscr{F}_0$ denotes the Fock space over the one-dimensional Hilbert space $\text{span}(1)$ and $\mathscr{F}_+$ is the Fock space over $\mathds{1}(-\Delta \neq 0) L^2(\Lambda)$. We will use the Gibbs variational principle \eqref{eq:grantPotential} and our trial state will be of the form  $|z \rangle \langle z | \otimes G$, where $|z \rangle$ denotes the coherent state 
\begin{equation}
	| z \rangle = \exp( z a_0^* - \overline{z} a_0 ) | \Omega_0 \rangle, \quad \quad z \in \mathbb{C}
	\label{eq:coherentstate}
\end{equation}
with the vacuum vector $\Omega_0 \in \mathscr{F}_0$. The coherent state describes a BEC with an expected number of $|z|^2$ particles in the constant function $z/|z| \in L^2(\Lambda)$. $G$ is a state on the excitation Fock space $\mathscr{F}_+$ that describes the thermally excited particles in a non-interacting gas. 

\begin{proposition}[Upper bound for the unperturbed grand potential]\label{prop:grandpotupperbound}
    Assume that $\beta^{-1} \sim \eta^{2/3}$ and that $\mu$ and the interaction potential $v$ satisfy the assumptions of Theorem \ref{prop:expnumberparticles}. Let  $N_0(\beta,\widetilde{\mu})$ be given by \eqref{eq:crittemp} for the ideal gas with chemical potential $\widetilde{\mu}$ that is the solution of \eqref{eq:GrantCanonicalEffectiveIddealGasChemPot}.  Then 
    \begin{equation}\label{eq:grandpotupperbound}
        \Phi(\beta,\mu)\leq  \Phi_+^{\mathrm{id}} (\beta,\widetilde{\mu}) - \frac{(\mu-\widetilde{\mu})^2\eta}{2\hat{v}(0)}   + C\eta^{2/3}
    \end{equation}
    for $\eta$ large enough and some constant $C>0$.
\end{proposition}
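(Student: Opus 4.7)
The plan is to apply the Gibbs variational principle $\Phi(\beta,\mu)\leq\mathcal{G}(\Gamma)$ with a factorized trial state that treats the zero mode classically and the excitations as an ideal Bose gas. Using the exponential decomposition $\mathscr{F}\cong\mathscr{F}_0\otimes\mathscr{F}_+$, I would take $\Gamma = |z\rangle\langle z|\otimes G$, where $|z\rangle$ is the coherent state \eqref{eq:coherentstate} with $z=\sqrt{N_0(\beta,\widetilde\mu)}$ real and positive, and $G$ is the Gibbs state on $\mathscr{F}_+$ of the ideal Bose gas at inverse temperature $\beta$ and chemical potential $\widetilde\mu$, so that $n_p:=\langle a_p^*a_p\rangle_G=(e^{\beta(p^2-\widetilde\mu)}-1)^{-1}$ for $p\neq 0$ and all anomalous two-point functions vanish. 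Because the coherent state is pure, $S(\Gamma)=S(G)$, and the kinetic-plus-entropy part on $\mathscr{F}_+$ rearranges exactly into $\Phi_+^{\mathrm{id}}(\beta,\widetilde\mu)+\widetilde\mu N_+^{\mathrm{id}}$ with $N_+^{\mathrm{id}}:=\sum_{p\neq 0}n_p$.

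The main computation is the evaluation of the expectation of the interaction $(2\eta)^{-1}\sum_{p,u,v}\hat v(p)\,a^*_{u+p}a^*_{v-p}a_u a_v$ on $\Gamma$. I would sort the sum by how many of the four momenta $\{u+p,v-p,u,v\}$ vanish; momentum conservation rules out the cases with exactly one or three zero momenta. Using $a_0|z\rangle=z|z\rangle$ on the condensate factor and Wick's theorem on the quasi-free state $G$ (the anomalous contractions $\langle a_p a_{-p}\rangle_G$, $\langle a_p^* a_{-p}^*\rangle_G$ vanish), the surviving contributions are: the condensate self-interaction $\hat v(0)|z|^4/(2\eta)$; the Hartree-type mixed terms $|z|^2\eta^{-1}\bigl(\hat v(0)N_+^{\mathrm{id}}+\sum_{p\neq 0}\hat v(p)n_p\bigr)$; and the excited-excited contribution $\hat v(0)(N_+^{\mathrm{id}})^2/(2\eta)+(2\eta)^{-1}\sum_{u,v\neq 0}\hat v(v-u)n_u n_v$.

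The next step is to use the defining equation \eqref{eq:GrantCanonicalEffectiveIddealGasChemPot} in the form $\mu-\widetilde\mu=\hat v(0)(|z|^2+N_+^{\mathrm{id}})/\eta$ to complete the square: the three direct interaction terms sum to $\hat v(0)(|z|^2+N_+^{\mathrm{id}})^2/(2\eta)$, and together with the chemical-potential contribution $-\mu(|z|^2+N_+^{\mathrm{id}})$ they simplify to $-(\mu-\widetilde\mu)^2\eta/(2\hat v(0))-\widetilde\mu(|z|^2+N_+^{\mathrm{id}})$, after which the $\widetilde\mu N_+^{\mathrm{id}}$ piece cancels the matching term from kinetic-plus-entropy. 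Altogether,
\begin{equation*}
  \mathcal{G}(\Gamma) = \Phi_+^{\mathrm{id}}(\beta,\widetilde\mu) - \frac{(\mu-\widetilde\mu)^2\eta}{2\hat v(0)} - \widetilde\mu|z|^2 + \frac{|z|^2}{\eta}\sum_{p\neq 0}\hat v(p)n_p + \frac{1}{2\eta}\sum_{u,v\neq 0}\hat v(v-u)n_u n_v.
\end{equation*}
The main conceptual obstacle, in my view, is spotting that the dangerous $O(\eta)$ excited-excited direct term $\hat v(0)(N_+^{\mathrm{id}})^2/(2\eta)$ is precisely what completes the square after one picks the reference chemical potential in $G$ to be $\widetilde\mu$ rather than $\mu$; the natural-looking choice $\mu_G=\mu$ would leave an $O(\eta)$ residue and destroy the estimate.

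It then remains to show that each of the three remainder terms is at most $C\eta^{2/3}$. For the first, $-\widetilde\mu|z|^2=|\widetilde\mu|/(e^{\beta|\widetilde\mu|}-1)\leq 1/\beta\sim\eta^{2/3}$ by the elementary inequality $e^x-1\geq x$. For the other two, the self-consistency relation \eqref{eq:GrantCanonicalEffectiveIddealGasChemPot} combined with \eqref{eq:muMinusMu0Bound1} gives $|z|^2+N_+^{\mathrm{id}}=(\mu-\widetilde\mu)\eta/\hat v(0)\lesssim\eta$, so both $|z|^2\lesssim\eta$ and $N_+^{\mathrm{id}}\lesssim\eta$; moreover $n_p$ is decreasing in $|p|^2$, so the uniform bound $n_p\leq n_{2\pi}\lesssim\beta^{-1}\sim\eta^{2/3}$ holds for all $p\neq 0$. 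Using $\hat v\in\ell^1(\Lambda^*)$ this yields $\sum_{p\neq 0}|\hat v(p)|n_p\lesssim\eta^{2/3}$ and, via Young's inequality, $\sum_{u,v\neq 0}|\hat v(v-u)|n_u n_v\leq\|\hat v\|_{\ell^1}(\max_{p\neq 0}n_p)N_+^{\mathrm{id}}\lesssim\eta^{5/3}$. Multiplying by the prefactors $|z|^2/\eta\lesssim 1$ and $1/(2\eta)$ respectively gives the desired $O(\eta^{2/3})$ bounds, which completes the upper bound.
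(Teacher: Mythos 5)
Your proposal is, in substance, the paper's own proof: the same trial state $|\sqrt{N_0(\beta,\widetilde\mu)}\rangle\langle\sqrt{N_0(\beta,\widetilde\mu)}|\otimes G_+^{\mathrm{id}}(\beta,\widetilde\mu)$, the same completion of the square driven by the self-consistency equation \eqref{eq:GrantCanonicalEffectiveIddealGasChemPot}, and the same $O(\eta^{2/3})$ estimates on the exchange and excited--excited remainder terms. One justification in the middle step is wrong, though the conclusion it is used for is correct. You claim that momentum conservation rules out the terms of the interaction in which exactly one of the four momenta $\{u+p,v-p,u,v\}$ vanishes. It does not: taking $u=0$ with $p\neq 0$ and $v,\,v-p\neq 0$ satisfies the conservation constraint, and the corresponding cubic terms $\eta^{-1}\sum\hat v(p)\,a^*_{k+p}a^*_{-p}a_k a_0+\mathrm{h.c.}$ genuinely appear in the Hamiltonian (they are displayed explicitly in the paper's decomposition \eqref{eq:decompH}). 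What kills them is not their absence but the structure of the trial state: after replacing $a_0$ by $z$ on the coherent factor, one is left with the expectation of an odd number of excited-mode operators in the gauge-invariant, particle-number-conserving quasi-free state $G_+^{\mathrm{id}}$, which vanishes --- the same mechanism you already invoke for the anomalous pair terms. With that correction the argument is complete and the resulting bound matches \eqref{eq:grandpotupperbound}.
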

\begin{proof}
Our trial state is given by
\begin{equation}
\Gamma^{\mathrm{trial}} = |   \sqrt{N_0(\beta,\widetilde{\mu})} \rangle \langle   \sqrt{N_0(\beta,\widetilde{\mu})} | \otimes G_+^{\mathrm{id}}(\beta,\widetilde{\mu})
\label{eq:trialstate}
\end{equation}
where
\begin{equation}
    G_+^{\mathrm{id}}(\beta,\widetilde{\mu}) =   \frac{\exp\left( -\beta (\de \Upsilon(-Q \Delta) - \widetilde{\mu} \mathcal{N}_+) \right)}{\tr_{\mathscr{F}_+} \exp\left( -\beta (\de \Upsilon(-Q \Delta) - \widetilde{\mu} \mathcal{N}_+) \right)}.
	\label{eq:boundChemicalPotential4}
\end{equation}
 The choice of $\widetilde{\mu}$ (recall \eqref{eq:GrantCanonicalEffectiveIddealGasChemPot}) implies  that
\begin{equation}
\label{eq:exptrial}    \tr[a_0^* a_0 \Gamma^{\mathrm{trial}}]=N_0(\beta,\widetilde{\mu}) \qquad \text{and} \qquad  \tr[\cN_+ \Gamma^{\mathrm{trial}}]=\frac{(\mu-\widetilde{\mu})\eta}{\hat{v}(0)}-N_0(\beta,\widetilde{\mu}).\end{equation}
We compute
\begin{align}
    \mathcal{H}_{\eta} =& \sum_{p \in \Lambda^*_+} (p^2-\widetilde{\mu}) a_p^* a_p +\widetilde{\mu}\cN_+ +\frac{\hat{v}(0)}{2\eta}a_0^* a_0^* a_0 a_0 + \frac{\hat{v}(0)}{\eta}\sum_{u \in \Lambda_+^*} a_u^* a_0^* a_u a_0 + \frac{\hat{v}(0)}{2\eta}\sum_{u,v \in \Lambda_+^*} a_u^* a_v^* a_u a_v \nonumber \\ &+ \frac{1}{2\eta} \sum_{p \in \Lambda^*_+} \hat{v}(p) \left\{ 2 a_p^* a_0^* a_p a_0 + a_0^* a_0^* a_p a_{-p} + a_p^* a_{-p}^* a_0 a_0 \right\} 
	+ \frac{1}{\eta} \sum_{p,k,p+k \in \Lambda_+^*} \hat{v}(p) \left\{ a^*_{k+p} a^*_{-p} a_k a_0 + h.c.   \right\}  \nonumber \\
	&+ \frac{1}{2\eta} \sum_{u,v,p,u+p,v-p \in \Lambda^*_+} \hat{v}(p) a^*_{u+p} a^*_{v-p} a_u a_v.
	\label{eq:decompH}
\end{align}
Since on the excited space our trial state is quasi-free and particle number conserving, it follows that
$$\tr[a_0^* a^*_0 a_p a_{-p} \Gamma^{\mathrm{trial}}]=0 \qquad \text{and} \qquad \tr[ a^*_{k+p} a^*_{-p} a_k a_0 \Gamma^{\mathrm{trial}}]=0$$
for any $p, k \in \Lambda_+^*$ with $p+k\neq 0$. On the other hand, using \eqref{eq:exptrial}, we see that
 \begin{equation*}
  \frac{\hat{v}(0)}{2\eta}\tr [a_0^* a_0^* a_0 a_0 \Gamma^{\mathrm{trial}}] + \frac{\hat{v}(0)}{\eta}\sum_{u \in \Lambda_+^*}\tr [a_0^* a_u^* a_0 a_u \Gamma^{\mathrm{trial}}]=\frac{\hat{v}(0)}{2\eta}N_0(\beta,\widetilde{\mu})^2+\frac{\hat{v}(0)}{\eta}N_0(\beta,\widetilde{\mu})\left(\frac{(\mu-\widetilde{\mu})\eta}{\hat{v}(0)}-N_0(\beta,\widetilde{\mu})\right). 
 \end{equation*}
In order to estimate the other terms, we introduce for $p,q \in \Lambda_+^*$ the notation $\gamma^{\mathrm{id}}(p,q) = \tr [ a_q^* a_p G_+^{\mathrm{id}}(\beta,\widetilde{\mu}) ]$. By translation invariance we have   $\gamma^{\mathrm{id}}(p,q)=\delta_{p,q}\gamma^{\mathrm{id}}(p)$, 
where 
$$\gamma^{\mathrm{id}}(p)=\frac{1}{\exp(\beta \left( p^2 -  \widetilde{\mu} ) \right) - 1}.$$
Using Wick's theorem again we get 
\begin{equation*}
 \sum_{u,v \in \Lambda_+^*} \tr[ a_u^* a_v^* a_u a_v \Gamma^{\mathrm{trial}}] =\sum_{u,v \in \Lambda_+^*} \gamma^{\mathrm{id}}(u)\gamma^{\mathrm{id}}(v)+\sum_{u \in \Lambda_+^*} \left(\gamma^{\mathrm{id}}(u)\right)^2\leq \left( \frac{(\mu-\widetilde{\mu})\eta}{\hat{v}(0)}-N_0(\beta,\widetilde{\mu})\right)^2+C\beta^{-2},\end{equation*}
where, in order to obtain the bound for the last term, we  used $\gamma^{\mathrm{id}}(p)\leq \beta^{-1}p^{-2}$ for $p\in \Lambda_+^*$.  This concludes the estimates for the last three terms in the first line of \eqref{eq:decompH}. 

We shall now estimate  the first term in the second line of \eqref{eq:decompH}. We see that 
\begin{equation}\label{eq:boundQQPP}
    \sum_{u \in \Lambda_+^*} \hat{v}(u) \tr[ a_u^* a_u a_0^* a_0 \Gamma^{\mathrm{trial}}]=N_0(\beta,\widetilde{\mu})\sum_{u \in \Lambda_+^*} \hat{v}(u)\gamma^{\mathrm{id}}(u)\leq C\beta^{-1} N_0(\beta,\widetilde{\mu}), 
\end{equation}
where we used the summability of $\hat{v}$ and $ \gamma^{\mathrm{id}}(p)\leq C\beta^{-1}$ for $p\in \Lambda_+^*$. Finally, estimating in a similar way the last term in \eqref{eq:decompH}, we get 
\begin{equation} \label{eq:boundQQQQ}
    \sum_{u,v,p,u+p,v-p \in \Lambda^*_+}\hat{v}(p) \tr[ a^*_{u+p} a^*_{v-p} a_u a_v \Gamma^{\mathrm{trial}}]= \sum_{v,p,v-p \in \Lambda^*_+}\hat{v}(p)\gamma^{\mathrm{id}}(v-p)\gamma^{\mathrm{id}}(v)\leq C\beta^{-1}\left( \frac{(\mu-\widetilde{\mu})\eta}{\hat{v}(0)}-N_0(\beta,\widetilde{\mu})\right).
    \end{equation}
Thus both, \eqref{eq:boundQQPP} and \eqref{eq:boundQQQQ}, are bounded by $C\beta^{-1}\eta$. Together with the fact that the corresponding terms in \eqref{eq:decompH} are coupled with $\eta^{-1}$, this implies that the contributions from these terms will be of order $\beta^{-1}=O(\eta^{2/3})$ and thus will be absorbed as an error term.     

Putting  these estimates together we obtain
\begin{equation}
  \begin{aligned}
      \tr[(\mathcal{H}_\eta-\mu \cN) \Gamma^{\mathrm{trial}}]\leq \tr[(d\Upsilon(Q(-\Delta-\widetilde{\mu})) G_+^{\mathrm{id}}(\beta,\widetilde{\mu})]-\frac{(\mu-\widetilde{\mu})^2\eta}{\hat{v}(0)}-\widetilde{\mu}N_0(\beta,\widetilde{\mu})+\frac{(\mu-\widetilde{\mu})^2\eta}{2\hat{v}(0)}+C\eta^{2/3}.
 \end{aligned}  \label{eq:upperboundidealtrialenergy}
\end{equation}
Here $\de \Upsilon(A)$ denotes the second quantization of the one-particle operator $A$. Using the fact that 
the entropy of $\Gamma^{\mathrm{trial}}$ satisfies $S(\Gamma^{\mathrm{trial}}) = S(G_+^{\mathrm{id}}(\beta,\widetilde{\mu}))$, we obtain the following upper bound
\begin{equation}
  \begin{aligned}
      \tr[(\mathcal{H}_\eta-\mu\cN) \Gamma^{\mathrm{trial}}]-\frac{1}{\beta}S(\Gamma^{\mathrm{trial}}) &\leq \Phi_+^{\mathrm{id}} (\beta,\widetilde{\mu}) - \frac{(\mu-\widetilde{\mu})^2\eta}{2\hat{v}(0)} -\widetilde{\mu}N_0(\beta,\widetilde{\mu})+ C \eta^{2/3},  
  \end{aligned}  \label{eq:upperboundidealtrial}
\end{equation}
where we used that 
$$ \tr[(d\Upsilon(Q(-\Delta-\widetilde{\mu})) G_+^{\mathrm{id}}(\beta,\widetilde{\mu})]-\frac{1}{\beta}S(G_+^{\mathrm{id}}(\beta,\widetilde{\mu}))=\Phi_+^{\mathrm{id}} (\beta,\widetilde{\mu}).$$
Since $\widetilde{\mu}N_0(\beta,\widetilde{\mu})\leq \beta^{-1}$ we obtain the final result.
\end{proof}

\begin{remark}
Upper bounds for a grand canonical version of the free energy in the Gross--Pitaevskii scaling (in the same temperature regime) have been obtained in \cite{DeuSeiYng-19,DeuSei-20,BocDeuSto-24,CapDeu-23}. 
\end{remark}

\subsection{Lower bound for the unperturbed grand potential and proof of Proposition \ref{prop:expnumberparticles}}
We start by recalling a well-known lemma, which gives a simple lower bound on the interaction term in the Hamiltonian.
\begin{lemma}\label{lem:OnsagersLemma}
	Let $v \in L^1(\Lambda)$ be a periodic function with summable Fourier coefficients $\hat{v} \geq 0$. Denote the second term in \eqref{eq:Hamiltonian} by $\mathcal{V}_\eta$. Then we have 
	\begin{equation}
		\mathcal{V}_\eta \geq \frac{\hat{v}(0) \mathcal{N}^2}{2\eta} - \frac{v(0) \mathcal{N}}{2\eta}.
		\label{eq:OnsagersInequality}
	\end{equation}
\end{lemma}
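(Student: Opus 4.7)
My plan is to reduce to first quantization on each $n$-particle sector and use the pointwise positivity of $\hat v$ after a Fourier expansion of the two-body potential. This is the standard Onsager-type argument and presents no conceptual difficulty.

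First, on the sector $L^2_{\mathrm{sym}}(\Lambda^n)$ the operator $\mathcal{V}_\eta$ acts as multiplication by $\eta^{-1}\sum_{1\le i<j\le n} v(x_i - x_j)$ and $\mathcal{N}$ acts as $n$. Inserting the Fourier series $v(x)=\sum_{p\in\Lambda^*}\hat v(p)\,e^{\mathrm i p\cdot x}$ (valid with $|\Lambda|=1$), I would write
$$
\sum_{i,j=1}^{n} v(x_i - x_j) \;=\; \sum_{p\in\Lambda^*}\hat v(p)\,\Bigl|\sum_{i=1}^n e^{\mathrm i p\cdot x_i}\Bigr|^2.
$$
Since each summand is nonnegative by the assumption $\hat v(p)\ge 0$, dropping every $p\ne 0$ (where the square evaluates to $n^2$) gives $\sum_{i,j}v(x_i - x_j)\ge \hat v(0)\,n^2$.

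Second, separating diagonal from off-diagonal via $\sum_{i,j}v(x_i - x_j) = 2\sum_{i<j}v(x_i - x_j) + n\,v(0)$, I obtain
$$
\sum_{i<j}v(x_i - x_j)\;\ge\;\frac{\hat v(0)}{2}\,n^2-\frac{v(0)}{2}\,n.
$$
Dividing by $\eta$ and identifying $n$ with $\mathcal{N}$ on the sector yields the inequality \eqref{eq:OnsagersInequality} on each $n$-particle subspace, hence on the full Fock space by taking the direct sum.

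The only routine care is the Fourier convention $v(0)=\sum_p \hat v(p)$, which follows from the definition of $\hat v(p)$ in the paper together with $|\Lambda|=1$. Alternatively, one could stay entirely in second quantization: define the density Fourier modes $\hat\rho_p:=\sum_{u\in\Lambda^*}a_{u+p}^{*}a_u$, use \eqref{eq:CCR} to normal-order $\hat\rho_p^{*}\hat\rho_p$ and derive the identity $\mathcal{V}_\eta = (2\eta)^{-1}\sum_{p\in\Lambda^*}\hat v(p)\,\hat\rho_p^{*}\hat\rho_p - (2\eta)^{-1}v(0)\,\mathcal{N}$, then keep only the nonnegative $p=0$ contribution $\hat v(0)\,\mathcal{N}^2$. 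Either route is textbook and contains no real obstacle.
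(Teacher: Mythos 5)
Your proof is correct and is essentially the paper's argument: the paper works in second quantization, writing $\sum_{p,u,v}\hat v(p)a_{u+p}^*a_{v-p}^*a_ua_v=\hat v(0)\mathcal{N}^2-v(0)\mathcal{N}+\sum_{p\neq 0}\hat v(p)B_p^*B_p$ with $B_p=\sum_u a_{u+p}^*a_u$ and dropping the nonnegative $p\neq 0$ terms, which is exactly the "alternative" route you sketch at the end, and your first-quantized version is the same identity read sector by sector. No gaps; the only implicit convention (interpreting $v(0)$ as $\sum_p\hat v(p)$, legitimate since $\hat v\in\ell^1$) is shared by the paper.
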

\begin{proof}
We compute
\begin{equation*}
    \begin{aligned}
    \sum_{p,u,v \in \Lambda^*} \hat{v}(p) a_{u+p}^* a_{v-p}^* a_u a_v & 
     =\sum_{u,v \in \Lambda^*} \hat{v}(0) a_{u}^* a_u a_{v}^* a_v -\sum_{p,u \in \Lambda^*} \hat{v}(p) a_{u+p}^*  a_{u+p}+\sum_{p \in \Lambda^*_+, \, u,v \in \Lambda^*} \hat{v}(p) a_{u+p}^* a_u a_{v-p}^* a_v \\
    &= \hat{v}(0) \mathcal{N}^2-v(0) \mathcal{N}+\sum_{p \in \Lambda^*_+} \hat{v}(p)B_p^* B_p,
\end{aligned}
\end{equation*}
where we introduced the notation $B_p=\sum_{u \in \Lambda^*}a_{u+p}^* a_u$. 	Since $\hat{v} \geq 0$ we can drop the last term for a lower bound and obtain the desired result.
\end{proof}

We are ready to state a lower bound for the unperturbed grand potential.

\begin{proposition}[Lower bound for the unperturbed grand potential]\label{prop:grandpotlowerbound}
    Assume that $\beta,\mu,v$ satisfy the assumptions of Theorem \ref{thm:SSBCondensate}. Let  $N_0(\beta,\widetilde{\mu})$ be given by \eqref{eq:crittemp} for the ideal gas with chemical potential $\widetilde{\mu}$ that is the solution of \eqref{eq:GrantCanonicalEffectiveIddealGasChemPot}.  Then for any bosonic state $\Gamma$ we have
    \begin{equation}\label{eq:grandpotowerbound}
        \mathcal{G}(\Gamma) \geq  \Phi_+^{\mathrm{id}} (\beta,\widetilde{\mu}) - \frac{(\mu-\widetilde{\mu})^2\eta}{2\hat{v}(0)}+\frac{\hat{v}(0)}{2\eta}\tr\left[\left(\cN-\frac{(\mu-\widetilde{\mu}+\frac{v(0)}{2\eta})\eta}{\hat{v}(0)}\right)^2 \Gamma\right]-C\eta^{2/3} \ln\eta 
    \end{equation}
    for $\eta$ large enough and some constant $C>0$.
\end{proposition}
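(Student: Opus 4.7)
The plan combines Onsager's Lemma~\ref{lem:OnsagersLemma}, the tensor factorization $\mathscr{F} \cong \mathscr{F}_0 \otimes \mathscr{F}_+$, and two independent Gibbs variational principles. First, I would apply Lemma~\ref{lem:OnsagersLemma} to replace the quartic interaction by the one-body operator $\tfrac{\hat v(0)}{2\eta}\cN^2 - \tfrac{v(0)}{2\eta}\cN$, group this with $-\mu\cN$, and rewrite the linear coefficient as $\mu + \tfrac{v(0)}{2\eta} = \widetilde\mu + \bigl(\mu - \widetilde\mu + \tfrac{v(0)}{2\eta}\bigr)$. Completing the square in $\cN$ between the quadratic $\tfrac{\hat v(0)}{2\eta}\cN^2$ and the new linear term produces exactly the squared deviation $\tfrac{\hat v(0)}{2\eta}(\cN - c_1)^2$ appearing on the right-hand side of \eqref{eq:grandpotowerbound}, together with the target constant $-\tfrac{(\mu-\widetilde\mu)^2\eta}{2\hat v(0)}$ and two by-products $-\tfrac{v(0)(\mu-\widetilde\mu)}{2\hat v(0)}$ and $-\tfrac{v(0)^2}{8\hat v(0)\eta}$, both $O(1)$ by \eqref{eq:muMinusMu0Bound1}. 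What remains on the operator side is then just the one-body piece $\dG(-\Delta) - \widetilde\mu\,\cN$.

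Next, I would exploit $\mathscr{F} \cong \mathscr{F}_0 \otimes \mathscr{F}_+$. Because $\dG(-\Delta) = \dG(Q(-\Delta))$ annihilates the zero-momentum factor and $\cN = a_0^* a_0 + \cN_+$, this one-body operator decomposes into $\bigl(\dG(Q(-\Delta)) - \widetilde\mu\,\cN_+\bigr)$ on $\mathscr{F}_+$ and $-\widetilde\mu\,a_0^* a_0$ on $\mathscr{F}_0$. Writing $\Gamma_+ = \tr_{\mathscr{F}_0}\Gamma$ and $\Gamma_0 = \tr_{\mathscr{F}_+}\Gamma$ for the two marginals and using subadditivity $S(\Gamma) \leq S(\Gamma_+) + S(\Gamma_0)$, the two factors decouple; Gibbs' variational principle then gives
\begin{align*}
  \tr_{\mathscr{F}_+}\!\bigl[(\dG(Q(-\Delta)) - \widetilde\mu\,\cN_+)\Gamma_+\bigr] - \tfrac{1}{\beta}S(\Gamma_+) &\;\geq\; \Phi_+^{\mathrm{id}}(\beta,\widetilde\mu),\\
  \tr_{\mathscr{F}_0}\!\bigl[-\widetilde\mu\,a_0^* a_0\,\Gamma_0\bigr] - \tfrac{1}{\beta}S(\Gamma_0) &\;\geq\; \tfrac{1}{\beta}\ln\bigl(1 - e^{\beta\widetilde\mu}\bigr).
\end{align*}
Assembling everything reproduces the claimed bound \eqref{eq:grandpotowerbound} modulo the single-mode grand potential $\beta^{-1}\ln\bigl(1 - e^{\beta\widetilde\mu}\bigr)$, which is negative and must be absorbed into the error.

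The final quantitative step is to show $\bigl|\beta^{-1}\ln(1 - e^{\beta\widetilde\mu})\bigr| \lesssim \eta^{2/3}\ln\eta$. When $|\widetilde\mu|\gtrsim \beta^{-1}\sim \eta^{2/3}$ this is the trivial $O(\beta^{-1}) = O(\eta^{2/3})$; the delicate regime is the condensed one, where Appendix~\ref{app:effectiveChemicalPotential} yields $\widetilde\mu \sim -\eta^{-1/3}$, so that $\beta\widetilde\mu\sim -\eta^{-1}$, $1 - e^{\beta\widetilde\mu}\sim \eta^{-1}$, and hence the quantity is $\sim \beta^{-1}\ln\eta \sim \eta^{2/3}\ln\eta$. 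This is the precise source of the $\ln\eta$ factor in the statement.

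The main obstacle — and the reason a naive approach fails — is the correct treatment of the $p=0$ mode. Simply discarding the nonnegative term $-\widetilde\mu\,a_0^* a_0$ (allowed since $\widetilde\mu<0$) would leave the entropy of $\Gamma_0$ entirely unchecked: a single bosonic mode with mean occupation of order $\eta$ can carry an entropy of order $\ln\eta$, which at temperature $\beta^{-1}\sim \eta^{2/3}$ already sits at the announced error threshold. Keeping the mode-zero chemical-potential piece and pairing it with the corresponding slice of the entropy through a dedicated one-mode Gibbs variational principle is what actually supplies the $\eta^{2/3}\ln\eta$ loss, and no sharper bound is available within this strategy in the condensed regime.
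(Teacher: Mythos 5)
Your proposal is correct and follows essentially the same route as the paper: Onsager's Lemma, completion of the square in $\mathcal{N}$ around $\widetilde{\mu}$, and identification of the zero-mode grand potential $\beta^{-1}\ln(1-e^{\beta\widetilde{\mu}})\gtrsim -\eta^{2/3}\ln\eta$ as the source of the logarithmic error. The only (immaterial) difference is bookkeeping: you factorize $\mathscr{F}\cong\mathscr{F}_0\otimes\mathscr{F}_+$ and invoke entropy subadditivity plus two separate Gibbs variational principles, whereas the paper applies a single Gibbs variational principle for $\dG(-\Delta-\widetilde{\mu})$ on the full Fock space and then splits $\Phi^{\mathrm{id}}=\Phi_+^{\mathrm{id}}+\beta^{-1}\ln(1-e^{\beta\widetilde{\mu}})$.
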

\begin{proof}
   It follows from Lemma \ref{lem:OnsagersLemma} that
   \begin{equation}
   \begin{aligned}
       \tr[(\mathcal{H}_\eta-\mu\cN) \Gamma ]&-\frac{1}{\beta}S(\Gamma ) \geq \tr\left[\left(d\Upsilon(-\Delta-\mu)+\frac{\hat{v}(0)}{2\eta} \cN^2-\frac{v(0)}{2\eta}\cN\right)\Gamma\right]-\frac{1}{\beta}S(\Gamma ) \\
       & \geq \Phi^{\mathrm{id}} (\beta,\widetilde{\mu})+\tr\left[\left(\frac{\hat{v}(0)}{2\eta} \cN^2-\left(\mu-\widetilde{\mu}+\frac{v(0)}{2\eta}\right)\cN\right)\Gamma\right]\\
       &\geq  \Phi_+^{\mathrm{id}} (\beta,\widetilde{\mu})+\frac{1}{\beta} \ln\left( 1 - \exp(\beta \widetilde{\mu}) \right)+\frac{\hat{v}(0)}{2\eta}\tr\left[\left(\cN-\frac{(\mu-\widetilde{\mu}+\frac{v(0)}{2\eta})\eta}{\hat{v}(0)}\right)^2\Gamma\right]-\frac{(\mu-\widetilde{\mu}+\frac{v(0)}{2\eta})^2\eta}{2\hat{v}(0)}\\
       &\geq \Phi_+^{\mathrm{id}} (\beta,\widetilde{\mu}) - \frac{(\mu-\widetilde{\mu})^2\eta}{2\hat{v}(0)}+\frac{\hat{v}(0)}{2\eta}\tr\left[\left(\cN-\frac{(\mu-\widetilde{\mu}+\frac{v(0)}{2\eta})\eta}{\hat{v}(0)}\right)^2 \Gamma\right]-C\eta^{2/3} \ln\eta, 
   \end{aligned}
   \end{equation}
   where in the last step we used the fact that $\frac{1}{\beta} \ln\left( 1 - \exp(\beta \widetilde{\mu}) \right)\geq -C \eta^{2/3} \ln \eta$ (which follows easily from \eqref{eq:GrantCanonicalEffectiveIddealGasChemPot}). Taking $\Gamma=G_{\beta,\mu}$ ends the proof of the lemma.
\end{proof}
We are ready to prove Proposition \ref{prop:expnumberparticles}.

\begin{proof}[Proof of Proposition \ref{prop:expnumberparticles}.] Putting together\eqref{eq:grandpotupperbound} and \eqref{eq:grandpotowerbound} we obtain
$$\frac{\hat{v}(0)}{2\eta}\tr\left[\left(\cN-\frac{(\mu-\widetilde{\mu}+\frac{v(0)}{2\eta})\eta}{\hat{v}(0)}\right)^2 G_{\beta,\mu}\right]\leq C\eta^{2/3} \ln\eta $$
for some constant $C>0$. By the Cauchy-Schwarz inequality we know that
$$\tr\left[\left(\cN-\frac{(\mu-\widetilde{\mu}+\frac{v(0)}{2\eta})\eta}{\hat{v}(0)}\right)^2 G_{\beta,\mu}\right]\geq \left[\tr\left[\left(\cN-\frac{(\mu-\widetilde{\mu}+\frac{v(0)}{2\eta})\eta}{\hat{v}(0)}\right) G_{\beta,\mu}\right]\right]^2$$
and thus
$$  \left| \tr\left[\left(\cN-\frac{(\mu-\widetilde{\mu}+\frac{v(0)}{2\eta})\eta}{\hat{v}(0)}\right) G_{\beta,\mu}\right]\right|\leq C \eta^{5/6} \sqrt{\ln \eta}.$$
In particular
$$\left|N(\beta,\mu)-\frac{(\mu-\widetilde{\mu})\eta}{\hat{v}(0)}\right|\leq C\eta^{5/6} \sqrt{\ln \eta},$$
which concludes the proof.
\end{proof}

\begin{remark}
Note that the condition for $\widetilde{\mu}$ \eqref{eq:GrantCanonicalEffectiveIddealGasChemPot} appears naturally from the proof of Proposition  \ref{prop:grandpotlowerbound}. Indeed, repeating the same estimates but with a general unknown chemical potential $\widehat{\mu}$ leads to the lower bound
$$\tr[(\mathcal{H}_\eta-\mu\cN) \Gamma ]-\frac{1}{\beta}S(\Gamma ) \geq \Phi_+^{\mathrm{id}} (\beta,\widehat{\mu}) - \frac{(\mu-\widehat{\mu})^2\eta}{2\hat{v}(0)} -C\eta^{2/3} \ln\eta. $$
The condition \eqref{eq:GrantCanonicalEffectiveIddealGasChemPot} can then be obtained by maximizing the first two terms on the right hand side of the inequality over $\widehat{\mu}$. 
\end{remark}
\section{Bounds for the perturbed grand potential}
\label{sec:boundsgrandPotential}

The goal of this section is to provide upper and lower bounds for a family of perturbed grand potentials. More precisely, we will consider
\begin{equation}
    \mathcal{H}^{\lambda,\delta}_{ \eta} = \mathcal{H}^{\lambda}_{ \eta}+\delta a_0^* a_0 = \mathcal{H}_{ \eta} + \delta a_0^* a_0+  \lambda { N(\beta,\mu)}^{1/2} (a_0 + a_0^*)
    \label{eq:HamiltonianWithDoublePerturb}
\end{equation}
for two parameters $\delta,\lambda\in \mathbb{R}$. 
We introduce the corresponding Gibbs state
\begin{equation}
    G^{\lambda,\delta}_{\beta,{ \mu}} = \frac{\exp\left( -\beta (\mathcal{H}^{\lambda,\delta}_{ \eta} - \mu  \mathcal{N}) \right) }{\tr \exp\left( -\beta (\mathcal{H}^{\lambda,\delta}_{ \eta} - \mu \mathcal{N}) \right) }
    \label{eq:doubleperturbedGibbsstate}
\end{equation}
and grand potential
\begin{equation}
    \Phi^{\lambda,\delta}(\beta,\mu) =  -\frac{1}{\beta} \ln\left( \tr \exp(-\beta(\mathcal{H}^{\lambda,\delta}_{ \eta} - \mu \mathcal{N})) \right).
    \label{eq:perturbedgrantPotential}
\end{equation}

The main result of this section is the following estimate. 

\begin{proposition}[Bound the perturbed grand potential] \label{prop:grandpotperturbedupperlowerbound}
    Assume that $\beta,\mu,v$ satisfy the assumptions of Theorem \ref{thm:SSBCondensate}. Let  $N_0(\beta,\widetilde{\mu})$ be given by \eqref{eq:crittemp} for the ideal gas with chemical potential $\widetilde{\mu}$ that is the solution of \eqref{eq:GrantCanonicalEffectiveIddealGasChemPot}.  Then, for all $\delta,\lambda\in \mathbb{R}$ and $\eta$ large enough, we have
\begin{align}\label{eq:grandpotperturbedupperlowerbound}
       &\left|  \Phi^{\lambda,\delta}(\beta,\mu) - \left(  \Phi_+^{\mathrm{id}} (\beta,\widetilde{\mu}) - \frac{(\mu-\widetilde{\mu})^2\eta}{2\hat{v}(0)}  +\delta N_0(\beta,\widetilde{\mu})   - 2 |\lambda|\sqrt{N(\beta,\mu) N_0(\beta,\widetilde{\mu})} \right)\right| \nn\\
       &\quad \le C \eta (\delta^2 + |\delta| |\lambda|^{1/3} + |\lambda| ^{4/3} + \eta^{-1/6} \ln \eta).
           \end{align}
\end{proposition}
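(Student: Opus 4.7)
The plan is to prove matching upper and lower bounds on $\Phi^{\lambda,\delta}(\beta,\mu)$ whose main terms agree with the bracketed expression on the left-hand side up to the claimed error. The structure parallels Section~\ref{sec:expectednumberproof}, but both sides must accommodate the extra zero-mode operators.

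For the upper bound I would use the Gibbs variational principle with the trial state
\begin{equation*}
\Gamma^{\mathrm{trial}} = |z_*\rangle\langle z_*| \otimes G_+^{\mathrm{id}}(\beta,\widetilde{\mu}), \qquad z_* = -\sgn(\lambda)\sqrt{N_0(\beta,\widetilde{\mu})},
\end{equation*}
identical to the one in Proposition~\ref{prop:grandpotupperbound} except that the phase of the coherent state is aligned so that the symmetry-breaking perturbation contributes with a negative sign. The contribution of $\mathcal{H}_\eta - \mu\mathcal{N}$ together with the von Neumann entropy is computed exactly as in the proof of Proposition~\ref{prop:grandpotupperbound}, yielding $\Phi_+^{\mathrm{id}}(\beta,\widetilde{\mu}) - (\mu-\widetilde{\mu})^2\eta/(2\hat{v}(0))$ up to $O(\eta^{2/3})$. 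Because $|z_*\rangle$ is a coherent state, the two extra terms evaluate exactly to $\delta|z_*|^2 = \delta N_0(\beta,\widetilde{\mu})$ and $2\lambda N(\beta,\mu)^{1/2}z_* = -2|\lambda|\sqrt{N(\beta,\mu)N_0(\beta,\widetilde{\mu})}$, and the upper bound follows with the comfortable error $O(\eta^{2/3})$.

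For the lower bound I would combine Onsager's Lemma~\ref{lem:OnsagersLemma} with a $c$-number substitution on the zero-momentum mode. First, Onsager reduces the interaction to $\hat v(0)\mathcal{N}^2/(2\eta) - v(0)\mathcal{N}/(2\eta)$, so that only $a_0,a_0^*$, $\mathcal{N}_+$ and $a_0^*a_0$ survive as non-kinetic operators. Using the factorization $\mathscr{F}\cong\mathscr{F}_0\otimes\mathscr{F}_+$ and a Berezin--Lieb-type inequality for the $a_0$-mode in the spirit of Section~\ref{ssec:cnumber} and \cite{DeuSei-20}, the grand potential is bounded below by
\begin{equation*}
-\frac{1}{\beta}\ln\int_{\mathbb{C}}\tr_{\mathscr{F}_+}\exp\!\bigl(-\beta H_{\mathrm{eff}}(z)\bigr)\frac{\mathrm{d}z}{\pi} - O(\eta^{5/6}\ln\eta),
\end{equation*}
where $H_{\mathrm{eff}}(z)$ is obtained by substituting $a_0\to z$, $a_0^*\to \bar z$ (with appropriate normal ordering) and reads
\begin{equation*}
H_{\mathrm{eff}}(z) = \mathrm{d}\Upsilon_+(-\Delta) + \tfrac{\hat{v}(0)}{2\eta}(|z|^2+\mathcal{N}_+)^2 - \bigl(\mu+\tfrac{v(0)}{2\eta}\bigr)(|z|^2+\mathcal{N}_+) + \delta|z|^2 + 2\lambda N(\beta,\mu)^{1/2}\Re(z).
\end{equation*}
Linearizing $\mathcal{N}_+^2$ around its expectation and invoking the ideal-gas lower bound with effective chemical potential $\mu - \hat v(0)|z|^2/\eta$ (as in Proposition~\ref{prop:grandpotlowerbound}) then reduces the problem to a classical minimization over $z\in\mathbb{C}$; evaluating the resulting effective potential at $z=z_*$ produces the target right-hand side.

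The perturbation-dependent error terms arise in this last classical minimization. Writing $\rho = |z|^2$ and optimizing over the phase of $z$, the effective classical functional has the schematic form $V(\rho) + \delta\rho - 2|\lambda|N(\beta,\mu)^{1/2}\sqrt{\rho}$ with $V$ convex, minimized near $\rho = N_0(\beta,\widetilde{\mu})$ and growing quartically like $\hat v(0)\rho^2/(2\eta)$ away from the minimum. The two balances between the quartic cost and the linear perturbations produce $\eta\delta^2$ (from $\hat v(0)\rho^2/\eta \sim \delta\rho$, i.e.\ $\rho\sim\delta\eta$) and $\eta|\lambda|^{4/3}$ (from $\hat v(0)\rho^2/\eta \sim |\lambda|\sqrt{N\rho}$ with $N\sim\eta$, i.e.\ $\rho\sim|\lambda|^{2/3}\eta$), while the interaction between these two balances yields $\eta|\delta||\lambda|^{1/3}$. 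The main technical obstacle I expect is the Berezin--Lieb step itself: bounding the error from replacing the single-mode quantum trace by a classical integral over $\mathbb{C}$ when the Gibbs state has a macroscopic occupation of $a_0$ is delicate, and this is where the $\eta^{-1/6}\ln\eta$ factor originates. A secondary difficulty is that the reduction of $\mathcal{N}_+^2$ and the minimization over $z$ must be performed jointly, so that the cross term $\hat v(0)|z|^2\mathcal{N}_+/\eta$ coupling the classical and the quantum degrees of freedom does not cause the loss of a factor of $\eta$.
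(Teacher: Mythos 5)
Your upper bound is exactly the paper's: the same trial state $|{-}\sgn(\lambda)\sqrt{N_0(\beta,\widetilde{\mu})}\rangle\langle\cdot|\otimes G_+^{\mathrm{id}}(\beta,\widetilde{\mu})$ with the phase aligned against the perturbation, and the computation reduces to Proposition~\ref{prop:grandpotupperbound}. The lower bound also follows the paper's general route (Onsager's lemma, then a coherent-state substitution on the zero mode, then a classical minimization), but your sketch is missing the one ingredient that makes the minimization close. After Onsager and the $c$-number substitution, the effective functional depends on \emph{two} variables per $z$, namely $|z|^2$ and $N_+(z)=\tr_{\mathscr{F}_+}[\mathcal{N}_+\Gamma_z]$, and the quartic term $\frac{\hat v(0)}{2\eta}(|z|^2+N_+(z))^2-(\mu-\widetilde\mu+O(\eta^{-1}))(|z|^2+N_+(z))$ only controls the \emph{sum} $|z|^2+N_+(z)$: it is degenerate along the line $|z|^2+N_+(z)=\mathrm{const}$, so by itself it cannot pin $|z|^2$ near $N_0(\beta,\widetilde\mu)$, which is exactly what is needed to produce $\delta N_0$ and $-2|\lambda|\sqrt{N N_0}$ with the claimed errors. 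The paper supplies the missing coercivity from the entropy: via Lemma~\ref{lem:entropyinequality}, the relative entropy identity \eqref{eq:relativeentropygrandpotential}, and the bosonic relative-entropy bound of Lemma~\ref{lem:lowerboundbosonicrelativentropy}, one gains the term $c_1\bigl(\sqrt{N_+(z)}-\sqrt{N_+(\beta,\widetilde\mu)}\bigr)^2$, which forces $N_+(z)$ toward its ideal-gas value and thereby converts the completed square in $|z|^2+N_+(z)$ into coercivity in $|z|^2-N_0$ alone (Case~3 of the paper's minimization). Your proposed substitute — ``linearizing $\mathcal{N}_+^2$ around its expectation and invoking the ideal-gas lower bound with effective chemical potential $\mu-\hat v(0)|z|^2/\eta$'' — could in principle provide an equivalent $z$-dependent coercivity, but as stated it is circular (expectation in which state?) and is not carried out; the claim that the resulting one-variable potential $V(\rho)$ is ``convex, minimized near $\rho=N_0$'' is precisely the point at issue, not something that can be assumed.

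Two smaller corrections. First, the paper does not linearize $(|z|^2+\mathcal{N}_+)^2$; it keeps the full square via Cauchy--Schwarz, $\tr_{\mathscr{F}_+}[(|z|^2+\mathcal{N}_+)^2\Gamma_z]\ge(|z|^2+N_+(z))^2$, which avoids having to control the variance of $\mathcal{N}_+$. Second, the $\eta^{-1/6}\ln\eta$ term does not originate in the Berezin--Lieb step (whose cost is only $O(\eta^{2/3}\ln\eta)$, from the upper-symbol corrections \eqref{eq:uppersymbolsnumber} and the classical entropy $S(\zeta_\Gamma)$); it comes from Proposition~\ref{prop:expnumberparticles}, i.e.\ from identifying $N(\beta,\mu)$ with $(\mu-\widetilde\mu)\eta/\hat v(0)$ up to $O(\eta^{5/6}\sqrt{\ln\eta})$ in the coefficient $\sqrt{N(\beta,\mu)}$ of the symmetry-breaking term (cf.\ \eqref{eq:rescaledvariablesminimization-b}). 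Your heuristics for the $\eta\delta^2$, $\eta|\lambda|^{4/3}$ and $\eta|\delta||\lambda|^{1/3}$ balances are consistent with the paper's case analysis, but they remain heuristics until the coercivity gap above is filled.
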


We will deduce \eqref{eq:grandpotperturbedupperlowerbound} from the two separate estimates, i.e. the upper and lower bounds. While the upper bound follows easily from the proof of Proposition \ref{prop:grandpotupperbound}, the lower bound requires a deeper investigation. The details are discussed in Propositions \ref{prop:grandpotperturbedupperbound} and \ref{prop:lowerboundgrandpotperturbed} below.

\subsection{Upper bound for the perturbed grand potential}

In this subsection we prove the following upper bound.

\begin{proposition}[Upper bound for the perturbed grand potential] \label{prop:grandpotperturbedupperbound}
    Assume that $\beta,\mu,v$ satisfy the assumptions of Theorem \ref{thm:SSBCondensate}. Let  $N_0(\beta,\widetilde{\mu})$ be given by \eqref{eq:crittemp} for the ideal gas with chemical potential $\widetilde{\mu}$ that is the solution of \eqref{eq:GrantCanonicalEffectiveIddealGasChemPot}.  Then, for all $\delta,\lambda\in \mathbb{R}$ and $\eta$ large enough, we have
    \begin{equation}\label{eq:grandpotperturbedupperbound}
        \Phi^{\lambda,\delta}(\beta,\mu)\leq  \Phi_+^{\mathrm{id}} (\beta,\widetilde{\mu}) - \frac{(\mu-\widetilde{\mu})^2\eta}{2\hat{v}(0)}  +\delta N_0(\beta,\widetilde{\mu})   - 2 |\lambda|\sqrt{N(\beta,\mu) N_0(\beta,\widetilde{\mu})} + C\eta^{2/3}   . 
    \end{equation}
\end{proposition}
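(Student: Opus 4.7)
The natural plan is to adapt the trial state used in Proposition~2.1 to incorporate the symmetry-breaking phase induced by the $\lambda$-term. Specifically, I would use the product trial state
\begin{equation*}
\Gamma^{\mathrm{trial}}_{\lambda,\delta} = |z\rangle\langle z| \otimes G_+^{\mathrm{id}}(\beta,\widetilde{\mu}), \qquad z = -\sgn(\lambda)\sqrt{N_0(\beta,\widetilde{\mu})} \in \mathbb{R},
\end{equation*}
where $|z\rangle$ is the coherent state on $\mathscr{F}_0$ from \eqref{eq:coherentstate} and $G_+^{\mathrm{id}}(\beta,\widetilde{\mu})$ is the excitation Gibbs state of \eqref{eq:boundChemicalPotential4}. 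The choice of the sign of $z$ is designed precisely so that the perturbation term $\lambda N(\beta,\mu)^{1/2}(a_0+a_0^*)$ contributes as negatively as possible in the trial energy. The entropy of $\Gamma^{\mathrm{trial}}_{\lambda,\delta}$ equals $S(G_+^{\mathrm{id}}(\beta,\widetilde{\mu}))$, exactly as in Proposition~2.1, because $|z\rangle\langle z|$ is a pure state.

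By the Gibbs variational principle it then suffices to evaluate $\mathcal{G}^{\lambda,\delta}(\Gamma^{\mathrm{trial}}_{\lambda,\delta})$ and show it is bounded by the right-hand side of \eqref{eq:grandpotperturbedupperbound}. The contribution of $\mathcal{H}_\eta - \mu \mathcal{N}$ and of the entropy term to the trial energy is identical to the one computed in the proof of Proposition~2.1 and is bounded by $\Phi_+^{\mathrm{id}}(\beta,\widetilde{\mu}) - (\mu-\widetilde{\mu})^2\eta/(2\hat v(0)) + C\eta^{2/3}$. Here I use that $\langle z|a_0|z\rangle = z$, $\langle z|a_0^* a_0|z\rangle = |z|^2 = N_0(\beta,\widetilde{\mu})$ and that odd moments of $a_0,a_0^*$ on the excitation factor vanish, so the cross terms in the decomposition \eqref{eq:decompH} are controlled by exactly the same estimates as before.

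It remains to compute the two new terms. The $\delta a_0^* a_0$ perturbation contributes
\begin{equation*}
\delta \tr[a_0^* a_0\, \Gamma^{\mathrm{trial}}_{\lambda,\delta}] = \delta |z|^2 = \delta N_0(\beta,\widetilde{\mu}),
\end{equation*}
and the symmetry-breaking perturbation contributes
\begin{equation*}
\lambda N(\beta,\mu)^{1/2} \tr[(a_0 + a_0^*)\Gamma^{\mathrm{trial}}_{\lambda,\delta}] = 2\lambda N(\beta,\mu)^{1/2}\, z = -2|\lambda|\sqrt{N(\beta,\mu)\,N_0(\beta,\widetilde{\mu})},
\end{equation*}
matching precisely the two new terms in \eqref{eq:grandpotperturbedupperbound}. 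Summing these yields the claimed bound.

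I do not anticipate any real obstacle here: the proof is a direct extension of Proposition~2.1, with the only substantive new input being the phase choice for the coherent parameter $z$. All error terms generated by the cross terms in \eqref{eq:decompH} are already controlled in Proposition~2.1 and do not depend on $\lambda$ or $\delta$, so they are absorbed into the $C\eta^{2/3}$ remainder.
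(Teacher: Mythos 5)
Your proposal is correct and follows essentially the same route as the paper: the same phase-adjusted coherent trial state $z=-\sgn(\lambda)\sqrt{N_0(\beta,\widetilde\mu)}$ tensored with the ideal excitation Gibbs state, the same reduction to the computations of Proposition~\ref{prop:grandpotupperbound}, and the same evaluation of the two new perturbation terms. The only point the paper makes slightly more explicitly is that the single $\lambda$-sensitive cross term in \eqref{eq:decompH} vanishes because the excitation factor is quasi-free and commutes with $\mathcal{N}_+$, which your "odd moments vanish" remark covers.
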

\begin{proof}
The proof is similar to the proof of Proposition \ref{prop:grandpotupperbound}. However, due to the symmetry breaking term in the Hamiltonian, we need to modify our trial state so that it takes care of the phase of the perturbation.  More  precisely, our trial state is given by
\begin{equation}
\Gamma^{\lambda} = | -(\lambda/|\lambda|) \sqrt{N_0(\beta,\widetilde{\mu})} \rangle \langle -(\lambda/|\lambda|) \sqrt{N_0(\beta,\widetilde{\mu})} | \otimes G_+^{\mathrm{id}}(\beta,\widetilde{\mu}).
\label{eq:SSBTrialState}
\end{equation}
The identities \eqref{eq:exptrial} remain the same. In particular
$$\delta \tr[a_0^* a_0  \Gamma^{\lambda}]=\delta N_0(\beta,\widetilde{\mu}).$$
However, now  
$$\tr[a_0^*  \Gamma^{\lambda}]=\frac{-\lambda}{|\lambda|}\sqrt{N_0(\beta,\widetilde{\mu})}$$
which implies
$$\lambda \tr[(a_0^*+a_0)  \Gamma^{\lambda}]=- 2 |\lambda|\sqrt{N_0(\beta,\widetilde{\mu})}.$$
When multiplied by $\sqrt{N(\beta,\mu)}$ this gives the $\lambda$-dependent term on the right hand side of \eqref{eq:grandpotperturbedupperbound}. All the other computations remain the same as in the proof of Proposition \ref{prop:grandpotupperbound}. (Note, that the only term which could be affected by the phase is the last  term in the second line of \eqref{eq:decompH} but its contribution will again vanish as $\Gamma^{\lambda}$, when restricted to $\mathscr{F}_+$, is a quasi-free state that commutes with $\mathcal{N}_+$.). This ends the proof. 
\end{proof}


\subsection{C-number substitution and relative entropy} \label{ssec:cnumber}
In order to derive a lower bound for the grand potential, we will use the $c$-number substitution in the spirit of \cite{LieSeiYng-05}. Let us now briefly recall the main facts about this approach. We start with the resolution of identity
\begin{equation}
	\int_{\mathbb{C}} | z \rangle \langle z | \de z = \mathds{1}_{\mathscr{F}_0} 
	\label{eq:resolutionofidentity}
\end{equation}
on the Fock space $\mathscr{F}_0$ over the $p=0$ mode. Here
 $|z\rangle$ defined in \eqref{eq:coherentstate}  
denotes the coherent state indexed by the complex number $z = x+iy$ with $x,y \in \mathbb{R}$ and $ \de z = \pi^{-1} \de x \de y$ denotes the appropriately normalized Lebesgue measure on the complex plane. Given any state $\Gamma \in \mathcal{S}$ with $\mathcal{S}$ in \eqref{eq:states}, we define the operator $\widetilde{\Gamma}_z$ acting on the excitation Fock space $\mathscr{F}_+$ (defined above \eqref{eq:coherentstate}) by
\begin{equation}
	\widetilde{\Gamma}_z = \tr_{\mathscr{F}_0}[ | z \rangle \langle z | \Gamma ] = \langle z, \Gamma z \rangle
    \label{eq:cnumber1}
\end{equation} 
and we denote 
\begin{equation}
	\zeta_{\Gamma}(z) = \tr_{\mathscr{F}_+}[ \widetilde{ \Gamma }_z ].
    \label{eq:cnumber2}
\end{equation}
  Since $\Gamma$ is a state, $\zeta_{\Gamma}$ defines a probability measure on $\mathbb{C}$. By $S(\zeta_{\Gamma})$ we denote the entropy of the classical probability distribution $\zeta_{\Gamma}$, that is,
\begin{equation}
	S(\zeta_\Gamma) = - \int_{\mathbb{C}} \zeta_{\Gamma}(z) \ln\left( \zeta_{\Gamma}(z) \right) \de z.
    \label{eq:cnumber3}
\end{equation} 
We also define the state 
\begin{equation}
	\Gamma_z = \frac{\widetilde{\Gamma}_z}{\tr_{\mathscr{F}_+}[ \widetilde{\Gamma}_z ]}
 \label{eq:cnumber4}
\end{equation}
on $\mathscr{F}_+$. The following Lemma, whose proof can be found in \cite[Lemma~3.2]{DeuSei-20}, provides us with an upper bound for the entropy of $\Gamma$ in terms of the ones of $\Gamma_z$ and $\zeta_{\Gamma}$.
\begin{lemma}
	\label{lem:entropyinequality}
	Let $\Gamma$ be a state on $\mathscr{F}$. The entropy of $\Gamma$ is bounded in the following way:
	\begin{equation}
		S(\Gamma) \leq \int_{\mathbb{C}} S(\Gamma_z) \zeta_{\Gamma}(z) \de z + S(\zeta_{\Gamma}).
    \label{eq:cnumber5}
	\end{equation}
\end{lemma}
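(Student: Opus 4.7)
The plan is to derive a Berezin--Lieb type inequality tailored to our bipartite setting $\mathscr{F} = \mathscr{F}_0 \otimes \mathscr{F}_+$. The key analytic input is that the function $f(x) = x \ln x$ (with $f(0) := 0$) is \emph{operator} convex on $[0,\infty)$, a classical fact (see e.g.\ Bhatia, \emph{Matrix Analysis}). Combined with the coherent-state resolution of identity \eqref{eq:resolutionofidentity}, this will give an upper bound on $-S(\Gamma) = \tr[\Gamma \ln \Gamma]$ in terms of the ``Husimi-averaged'' quantity $\int \tr_{\mathscr{F}_+}[\widetilde{\Gamma}_z \ln \widetilde{\Gamma}_z] \de z$, from which the desired inequality follows by unwinding the definitions \eqref{eq:cnumber2}--\eqref{eq:cnumber4}.

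For each $z \in \mathbb{C}$ I introduce the isometry $V_z : \mathscr{F}_+ \to \mathscr{F}_0 \otimes \mathscr{F}_+$ defined by $V_z \phi = |z\rangle \otimes \phi$. Since $\langle z | z\rangle = 1$, we have $V_z^* V_z = \mathds{1}_{\mathscr{F}_+}$, and by \eqref{eq:cnumber1} also $V_z^* \Gamma V_z = \widetilde{\Gamma}_z$. The Hansen--Pedersen operator Jensen inequality applied to $f$ and the isometry $V_z$ then yields the operator bound
\begin{equation*}
    \widetilde{\Gamma}_z \ln \widetilde{\Gamma}_z \;=\; f(V_z^* \Gamma V_z) \;\leq\; V_z^* f(\Gamma) V_z \;=\; \langle z, (\Gamma \ln \Gamma) z \rangle_{\mathscr{F}_0}
\end{equation*}
on $\mathscr{F}_+$. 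Taking the partial trace over $\mathscr{F}_+$ (which commutes with the partial inner product in $\mathscr{F}_0$) and then integrating over $z \in \mathbb{C}$, the right-hand side collapses, by \eqref{eq:resolutionofidentity} applied to the trace-class operator $\tr_{\mathscr{F}_+}[\Gamma \ln \Gamma]$ on $\mathscr{F}_0$, to $\tr[\Gamma \ln \Gamma] = -S(\Gamma)$. This gives
\begin{equation*}
    \int_{\mathbb{C}} \tr_{\mathscr{F}_+}[\widetilde{\Gamma}_z \ln \widetilde{\Gamma}_z] \de z \;\leq\; -S(\Gamma).
\end{equation*}

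To finish, I unwind $\widetilde{\Gamma}_z = \zeta_\Gamma(z) \Gamma_z$: the functional calculus together with $\tr_{\mathscr{F}_+}[\Gamma_z] = 1$ and $\tr_{\mathscr{F}_+}[\Gamma_z \ln \Gamma_z] = -S(\Gamma_z)$ gives
\begin{equation*}
    \tr_{\mathscr{F}_+}[\widetilde{\Gamma}_z \ln \widetilde{\Gamma}_z] \;=\; \zeta_\Gamma(z) \ln \zeta_\Gamma(z) \;-\; \zeta_\Gamma(z) S(\Gamma_z),
\end{equation*}
which integrates in $z$ to $-S(\zeta_\Gamma) - \int \zeta_\Gamma(z) S(\Gamma_z) \de z$ by \eqref{eq:cnumber3}. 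Substituting into the previous display and rearranging gives the claim. The main technical point is the operator-Jensen step, which relies crucially on $V_z$ being a genuine isometry (i.e.\ on the normalization $\langle z|z\rangle = 1$ of the coherent states) and on the operator convexity of $x\ln x$; the remaining manipulations are elementary, with trace/integral interchanges justified by the trace-class property of $\Gamma$ and the usual conventions $0\ln 0 = 0$ on the null set where $\zeta_\Gamma(z)=0$. If $\int \zeta_\Gamma(z) S(\Gamma_z) \de z = +\infty$ or $S(\zeta_\Gamma) = +\infty$ the bound is trivial.
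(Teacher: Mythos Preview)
Your proof is correct. The paper does not supply its own argument for this lemma; it simply cites \cite[Lemma~3.2]{DeuSei-20}. The approach you take---the Davis--Choi--Jensen (Hansen--Pedersen) operator inequality applied to the isometry $V_z\phi = |z\rangle\otimes\phi$ together with the operator convexity of $x\ln x$, followed by integrating against the resolution of the identity \eqref{eq:resolutionofidentity}---is precisely the standard Berezin--Lieb route to this bound and is essentially what is done in the cited reference.
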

The above Lemma allows us to replace the entropy of $\Gamma$ in the grand potential functional by the ones of $\Gamma_z$ and $\zeta_{\Gamma}$ for a lower bound. In order to express also the energy in terms of the ones of $\Gamma_z$ and $\zeta_{\Gamma}$, we introduce the upper  symbol $\mathcal{H}^{\mathrm{s}}$ related to a general Hamiltonian $\mathcal{H}$. 
It is defined by the relation
\begin{equation}
	\mathcal{H} = \int_{\mathbb{C}} \mathcal{H}^{\mathrm{s}}(z) | z \rangle \langle z | \de z. \label{eq:cnumber6}
\end{equation}
For example, the upper symbols $(a_0)^s(z)$ and $(a^*_0)^s(z)$ of $a_0$ and $a_0^*$ are simply $z$ and $\bar{z}$, respectively. For further reference we notice that the upper symbols $\cN^{s}(z)$ of $\cN$ and $(\cN^2)^{s}(z)$ of $\cN^2$ are given by
\begin{equation}\label{eq:uppersymbolsnumber}
  \cN^{s}(z)=|z|^2-1+\cN_+,\qquad (\cN^2)^{s}(z)=(|z|^2+\cN_+)^2-3(|z|^2+\cN_+)+\cN_++1.
\end{equation}
 More information on the upper symbol can be found, e.g., in \cite{LieSeiYng-05}. Using \eqref{eq:cnumber6} we can write the expectation of the energy with respect to a state $\Gamma \in \mathcal{S}$ as
\begin{equation}
	\tr[\mathcal{H} \Gamma] = \int_{\mathbb{C}} \tr[ \mathcal{H}^{\mathrm{s}}(z) | z \rangle \langle z | \Gamma ] \de z = \int_{\mathbb{C}} \tr_{\mathscr{F}_+}[ \mathcal{H}^{\mathrm{s}}(z) \Gamma_z ] \zeta_{\Gamma}(z) \de z.
	\label{eq:cnumber7}
\end{equation}
This will be used in the next subsection. 

Another quantity that will be used in the proof below is the relative entropy. The relative entropy of a state $\Gamma$ with respect to another state $\Gamma'$ given by 
$$S(\Gamma, \Gamma')=\tr[\Gamma(\ln(\Gamma)-\ln(\Gamma'))].$$
It quantifies the difference between the grand potential functional evaluated at a given state $\Gamma$ and the grand potential (corresponding to a Gibbs state), i.e.
\begin{equation} \label{eq:relativeentropygrandpotential}
    S(\Gamma, \Gamma_0)=\beta (\mathcal{G}(\Gamma)-\mathcal{G}(\Gamma_0))
\end{equation}
where $\Gamma_0$ is a Gibbs state (cf. \eqref{eq:grandPotentialFunctional}). In the case when the underlying Hamiltonian is non-interacting, i.e. in the definition of \eqref{eq:grandPotentialFunctional} one has $\mathcal{H}=d\Upsilon(h)$ for a one-body operator $h$, then the relative entropy can be bounded from below by the so-called bosonic relative entropy  as done in \cite{DeuSeiYng-19,DeuSei-20}. Indeed, in this case we have
\begin{equation*} 
\beta (\mathcal{G}(\Gamma)-\mathcal{G}(\Gamma_0))=\beta\tr[h\gamma_{\Gamma}]-S(\Gamma)-\beta\tr[h\gamma_{\Gamma_0}]+S(\Gamma_0)
\end{equation*}
where by $\gamma_G$ we denote the one-body reduced density matrix of a state $G$ (cf. \eqref{eq:1pdm}). Since the Gibbs state of a non-interacting Hamiltonian is quasi-free, we have (see, e.g., \cite[Appendix A]{NapReuSol-18a}) that 
$$S(\Gamma_0)=-\tr \left[ \gamma_{\Gamma_0} \ln(\gamma_{\Gamma_0}) - (1+\gamma_{\Gamma_0}) \ln (1+\gamma_{\Gamma_0})\right]=: -\tr \left[ \sigma\left( \gamma_{\Gamma_0} \right) \right],$$
where we introduced
$$\sigma(x) = x \ln(x) - (1+x) \ln (1+x).$$
Now, using the fact (see \cite[2.5.14.5]{Thirring_4}) that
$$-\tr \left[ \sigma\left( \gamma_{\Gamma} \right) \right]\geq S(\Gamma),$$
we arrive at 
\begin{equation} \label{eq:relativeentropylowerbound}
S(\Gamma, \Gamma_0)\geq s(\gamma_{\Gamma},\gamma_{\Gamma_0}),
\end{equation}
where for two nonnegative operators $a, b$ with finite trace, the bosonic relative entropy $s(a,b)$ is defined by
\begin{equation}
s(a,b) = \sum_{i,j} \left| \langle \psi_i, \varphi_j \rangle \right|^2 \left( \sigma(\gamma_i) - \sigma(\eta_j) - \sigma'(\eta_j)(\gamma_i-\eta_j)  \right).
\label{eq:asymptotics1pdm7}
\end{equation}
Here $\{ \lambda_i, \psi_i \}$ and $\{ \eta_j, \varphi_j \}$ denote the eigenvalues and eigenfunctions of $a$ and $b$, respectively. 

In order to be able to use \eqref{eq:relativeentropylowerbound} one needs a lower bound on the bosonic relative entropy. This is provided by the following lemma whose proof can be found in  \cite[Lemma~4.1]{DeuSei-20} (see also \cite[Lemma~4.1]{DeuSeiYng-19}).
 \begin{lemma}\label{lem:lowerboundbosonicrelativentropy}
Assume $a$ and $b$ are  two nonnegative
trace-class operators and let $s(a,b)$ be given in \eqref{eq:asymptotics1pdm7}. There exists a constant $c_1\geq \frac{2}{27}$ such that 
$$ s(a, b) \geq  c_1 \frac{\|a-b\|_1^2}{\|1+b\| \tr[a+b]}.$$
 \end{lemma}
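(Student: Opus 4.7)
The plan is to exploit the fact that $s(a,b)$ is a sum of one-variable Bregman divergences of a convex scalar function and reduce the lemma to a scalar Pinsker-type inequality combined with a Cauchy--Schwarz manipulation. A direct computation gives $\sigma''(x) = 1/(x(1+x)) > 0$ on $(0,\infty)$, so $\sigma$ is strictly convex and each summand $B(\lambda_i,\eta_j) := \sigma(\lambda_i) - \sigma(\eta_j) - \sigma'(\eta_j)(\lambda_i-\eta_j)$ is nonnegative, recovering $s(a,b)\geq 0$. The quantitative task is to upgrade this to the claimed quadratic bound.

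The first and most delicate step is a pointwise lower bound of Pinsker type, namely
\[ B(\lambda,\eta) \geq \frac{c\,(\lambda-\eta)^2}{(1+\eta)(\lambda+\eta)} \]
for some absolute $c>0$. I would prove this by a case analysis. In the comparable regime $\max(\lambda,\eta) \leq 2\min(\lambda,\eta)$, the integral form $B(\lambda,\eta) = \int_\eta^\lambda (\lambda-t)\sigma''(t)\,dt$ together with $\sigma''(t)=1/(t(1+t))$ readily yields a bound of the required shape, since on the interval of integration $t(1+t) \lesssim (\lambda+\eta)(1+\eta)$. In the non-comparable regime $\lambda \gg \eta$ (and the symmetric case), the naive Taylor estimate is too weak; instead one uses the asymptotic $\sigma(\lambda) = -\ln(1+\lambda) - 1 + o(1)$ together with $\sigma'(\eta) = -\ln(1+1/\eta)$ to extract the linear-in-$\lambda$ lower bound $B(\lambda,\eta) \gtrsim \lambda/(1+\eta)$, which dominates the target quadratic quantity in this regime.

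The second step is Cauchy--Schwarz combined with a trace-norm estimate. From the expansion $a - b = \sum_{i,j}(\lambda_i-\eta_j)|\psi_i\rangle\langle\psi_i|\varphi_j\rangle\langle\varphi_j|$ and the triangle inequality applied to the trace norm of each rank-one piece, we obtain $\|a-b\|_1 \leq \sum_{i,j}|\lambda_i-\eta_j|\,|\langle\psi_i,\varphi_j\rangle|$. Applying Cauchy--Schwarz with the weights $(1+\eta_j)(\lambda_i+\eta_j)$ yields
\[ \|a-b\|_1^2 \leq \Big(\sum_{i,j}\tfrac{(\lambda_i-\eta_j)^2\,|\langle\psi_i,\varphi_j\rangle|^2}{(1+\eta_j)(\lambda_i+\eta_j)}\Big)\Big(\sum_{i,j}(1+\eta_j)(\lambda_i+\eta_j)|\langle\psi_i,\varphi_j\rangle|^2\Big). \]
By the pointwise bound of step one, the first factor is bounded by $s(a,b)/c$, while the second is at most $\|1+b\|\sum_{i,j}(\lambda_i+\eta_j)|\langle\psi_i,\varphi_j\rangle|^2 = \|1+b\|\,\tr[a+b]$, using $1+\eta_j \leq \|1+b\|$ and completeness of the eigenbases $\{\psi_i\}$ and $\{\varphi_j\}$.

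Combining these yields $\|a-b\|_1^2 \leq c^{-1}\,s(a,b)\,\|1+b\|\,\tr[a+b]$, which is the claim; the explicit constant $c_1\geq 2/27$ follows by bookkeeping the constants across both regimes in step one. The main obstacle is the pointwise Pinsker-type bound in the non-comparable regime $\lambda \gg \eta$, where the Bregman divergence is not well-modelled by a quadratic Taylor expansion and one has to use the global shape of $\sigma$, in particular its logarithmic decay at infinity, rather than only local second-order information.
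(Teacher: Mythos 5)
Your overall strategy (a scalar Pinsker-type lower bound for the Bregman divergence of $\sigma$, followed by a weighted Cauchy--Schwarz argument) is the right one; note that the paper does not prove this lemma itself but cites \cite[Lemma~4.1]{DeuSei-20}, whose proof follows exactly this two-step scheme, and your step one, including the observation that the regime $\lambda\gg\eta$ requires the global logarithmic behaviour of $\sigma$ rather than a local Taylor expansion, is sound.

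There is, however, a genuine gap in your second step: the powers of the overlaps do not match. Your triangle-inequality bound gives $\|a-b\|_1\leq\sum_{i,j}|\lambda_i-\eta_j|\,|\langle\psi_i,\varphi_j\rangle|$ with a \emph{single} power of the overlap, whereas the Cauchy--Schwarz factorization you display produces, upon multiplying the square roots of the two summands, the quantity $\sum_{i,j}|\lambda_i-\eta_j|\,|\langle\psi_i,\varphi_j\rangle|^2$ with the overlap \emph{squared}. These are not comparable in the direction you need: the inequality $\|a-b\|_1\leq\sum_{i,j}|\lambda_i-\eta_j|\,|\langle\psi_i,\varphi_j\rangle|^2$ is false in general, as one sees by taking $a=|\psi\rangle\langle\psi|$ and $b=|\varphi\rangle\langle\varphi|$ with $|\langle\psi,\varphi\rangle|^2=\tfrac12$, for which the left side is $\sqrt{2}$ and the right side is $1$. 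Conversely, if you insist on starting from the single-power bound, the second Cauchy--Schwarz factor becomes $\sum_{i,j}(1+\eta_j)(\lambda_i+\eta_j)$ with no overlap weight at all, which diverges. The standard repair is to keep the polar decomposition explicit: writing $\|a-b\|_1=|\tr[U^*(a-b)]|\leq\sum_{i,j}|\lambda_i-\eta_j|\,|\langle\psi_i,\varphi_j\rangle|\,|\langle\varphi_j,U^*\psi_i\rangle|$ and applying Cauchy--Schwarz so that $|\langle\psi_i,\varphi_j\rangle|^2$ lands in the first factor and $|\langle\varphi_j,U^*\psi_i\rangle|^2$ in the second, one controls the second factor by $\|1+b\|\tr[a+b]$ using $\sum_j|\langle\varphi_j,U^*\psi_i\rangle|^2\leq1$ and $\sum_i|\langle\varphi_j,U^*\psi_i\rangle|^2\leq1$. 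With this correction your argument closes.
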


\subsection{Lower bound for the perturbed grand potential} \label{ssec:lowerboundgrandpot}

Using the tools introduced in the previous section, we will prove the following lower bound on the perturbed grand potential. 

\begin{proposition}[Lower bound for the perturbed grand potential]\label{prop:lowerboundgrandpotperturbed}
    Assume that $\beta,\mu$ and the interaction potential $v$ satisfy the assumptions of Theorem \ref{thm:SSBCondensate}. Let  $N_0(\beta,\widetilde{\mu})$ be given by \eqref{eq:crittemp} for the ideal gas with chemical potential $\widetilde{\mu}$ that is the solution of \eqref{eq:GrantCanonicalEffectiveIddealGasChemPot}.  Then for all $\delta, \lambda\in \mathbb{R}$ and $\eta$ large enough, we have 
  \begin{equation}\label{eq:grandpotlowerboundperturbed}
  \begin{aligned}
       \Phi^{\lambda,\delta}(\beta,\mu)&\geq  \Phi_+^{\mathrm{id}} (\beta,\widetilde{\mu}) -   \frac{(\mu-\widetilde \mu)^2\eta}{2\hat v(0)}  +  \delta N_0(\beta,\widetilde \mu)  - 2|\lambda| \sqrt{N(\beta,\mu) N_0(\beta,\widetilde \mu)} \nn\\
&\quad - C \eta (\delta^2 + |\delta| |\lambda|^{1/3} + |\lambda| ^{4/3} + \eta^{-1/3} \ln \eta ).  
\end{aligned}
       \end{equation}
\end{proposition}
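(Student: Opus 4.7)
The strategy extends the proof of Proposition~\ref{prop:grandpotlowerbound} by treating the zero-momentum mode $a_0$ classically via the $c$-number substitution of Section~\ref{ssec:cnumber}, so as to accommodate the symmetry-breaking perturbation $\lambda N(\beta,\mu)^{1/2}(a_0+a_0^*)$. Starting from $\mathcal{G}^{\lambda,\delta}(\Gamma)$ for an arbitrary state $\Gamma$, I would first apply Lemma~\ref{lem:OnsagersLemma} to the interaction and complete the square in $\mathcal{N}$ around $\bar N = (\mu-\widetilde\mu+v(0)/(2\eta))\eta/\hat v(0)$ exactly as in Proposition~\ref{prop:grandpotlowerbound}, yielding
\begin{align*}
    \mathcal{G}^{\lambda,\delta}(\Gamma) &\geq \tr[\de\Upsilon(-\Delta-\widetilde\mu)\Gamma] - \beta^{-1}S(\Gamma) + \delta\tr[a_0^*a_0\Gamma] + \lambda\sqrt{N(\beta,\mu)}\tr[(a_0+a_0^*)\Gamma] \\
    &\quad + \frac{\hat v(0)}{2\eta}\tr[(\mathcal N - \bar N)^2\Gamma] - \frac{(\mu-\widetilde\mu)^2\eta}{2\hat v(0)} - O(\eta^{2/3}\ln\eta).
\end{align*}
Using the factorization $\mathscr F \cong \mathscr F_0 \otimes \mathscr F_+$, the identities \eqref{eq:cnumber6}--\eqref{eq:cnumber7}, the upper-symbol formulae \eqref{eq:uppersymbolsnumber} together with Cauchy--Schwarz in $\Gamma_z$ on the $\mathcal{N}_+^2$ cross term, and the entropy inequality of Lemma~\ref{lem:entropyinequality}, the $c$-number substitution converts the right-hand side into an integral against $\zeta_\Gamma(z)\,\de z$ over $\mathbb{C}$.

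For each fixed $z$ I would bound the excitation free energy $\mathcal{E}_+(\Gamma_z) := \tr_{\mathscr F_+}[\de\Upsilon(-Q\Delta-\widetilde\mu)\Gamma_z] - \beta^{-1}S(\Gamma_z)$ from below by $\Phi_+^{\mathrm{id}}(\beta,\widetilde\mu) + \beta^{-1}s(\gamma_{\Gamma_z}^+, \gamma_+^{\mathrm{id}}(\widetilde\mu))$ via the quasi-free entropy inequality recalled in Section~\ref{ssec:cnumber}, and then invoke Lemma~\ref{lem:lowerboundbosonicrelativentropy}. Using $\|1+\gamma_+^{\mathrm{id}}\|_\infty \lesssim \eta^{2/3}$ (the largest eigenvalue sits at the smallest nonzero momentum), $\tr\gamma_+^{\mathrm{id}} \lesssim \eta$, and $\beta\sim \eta^{-2/3}$, the relative entropy is bounded below by $c\eta^{-1}(N_+(\Gamma_z)-N_+^{\mathrm{id}})^2$, where $N_+(\Gamma_z) := \tr_{\mathscr F_+}[\mathcal N_+ \Gamma_z]$. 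Combining this penalty with the Onsager-generated term $(|z|^2+N_+(\Gamma_z)-\bar N)^2/\eta$ and optimizing the resulting sum of two quadratics in $N_+(\Gamma_z)$ (using $\bar N - N_+^{\mathrm{id}} \simeq N_0(\beta,\widetilde\mu)$), the problem reduces to
\begin{equation*}
    \mathcal{G}^{\lambda,\delta}(\Gamma) \geq \Phi_+^{\mathrm{id}}(\beta,\widetilde\mu) - \frac{(\mu-\widetilde\mu)^2\eta}{2\hat v(0)} + \int_{\mathbb C} F(z)\,\zeta_\Gamma(z)\,\de z - \beta^{-1}S(\zeta_\Gamma) - O(\eta^{2/3}\ln\eta),
\end{equation*}
with the effective classical functional $F(z) = \delta|z|^2 + 2\lambda\sqrt{N(\beta,\mu)}\,\Re(z) + \frac{c_0}{\eta}(|z|^2 - N_0(\beta,\widetilde\mu))^2$ for some $c_0>0$.

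The final step handles the classical expression via the Gibbs inequality $\int F\zeta - \beta^{-1}S(\zeta) \geq -\beta^{-1}\ln\int_{\mathbb C} e^{-\beta F(z)}\,\de z$ together with a Laplace analysis around the minimizer $z^* = -\sgn(\lambda)\sqrt{N_0(\beta,\widetilde\mu)}$, giving the leading term $F(z^*) = \delta N_0(\beta,\widetilde\mu) - 2|\lambda|\sqrt{N(\beta,\mu)N_0(\beta,\widetilde\mu)}$ with Gaussian fluctuations of order $\beta^{-1}\ln\eta = O(\eta^{2/3}\ln\eta)$. The anomalous error $C\eta(\delta^2 + |\delta||\lambda|^{1/3} + |\lambda|^{4/3})$ arises from the anharmonic regime, where for sufficiently large $|\lambda|$ the optimal $|z|$ deviates substantially from $\sqrt{N_0}$ (scaling as $|z|\sim(|\lambda|\sqrt{N(\beta,\mu)}\eta)^{1/3}$), and the quartic tail $(|z|^2-N_0)^2/\eta$, rather than its quadratic expansion, must balance the linear forcing $2|\lambda|\sqrt{N(\beta,\mu)}|z|$; the $\eta|\lambda|^{4/3}$ term is extracted via a Young-type inequality $2|\lambda|\sqrt{N(\beta,\mu)}|z| \leq c\,|z|^4/\eta + C\eta|\lambda|^{4/3}$, and the $\eta\delta^2$ and $\eta|\delta||\lambda|^{1/3}$ contributions come from analogous estimates for the $\delta|z|^2$ contribution and its interplay with the linear forcing. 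The main obstacle is the reduction to the clean functional $F(z)$: combining the bosonic relative entropy (which pins $\Gamma_z$ close to $G_+^{\mathrm{id}}(\beta,\widetilde\mu)$) with the Onsager quadratic (which pins $\tr[\mathcal N\Gamma]$ close to $\bar N$) so that their sum produces a nondegenerate quartic in $|z|^2-N_0$, while keeping track of the subleading corrections arising in the upper symbol of $\mathcal N^2$ in \eqref{eq:uppersymbolsnumber} and in the passage $\Phi^{\mathrm{id}}\to\Phi_+^{\mathrm{id}}$, all of which contribute at most at the $O(\eta^{2/3}\ln\eta)$ level.
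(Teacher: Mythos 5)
Your proposal follows essentially the same route as the paper: Onsager's lemma, the $c$-number substitution with upper symbols, the Berezin--Lieb-type entropy inequality of Lemma~\ref{lem:entropyinequality}, the bosonic relative entropy of Lemma~\ref{lem:lowerboundbosonicrelativentropy} to pin $N_+(z)$ near $N_+(\beta,\widetilde\mu)$, and a final minimization of an effective classical functional in $(|z|,N_+(z))$ whose competition between the quartic Onsager term and the linear forcing produces exactly the error terms $\eta(\delta^2+|\delta||\lambda|^{1/3}+|\lambda|^{4/3})$. Two points of divergence are worth recording. First, your claim that the relative entropy yields a penalty $c\,\eta^{-1}(N_+(\Gamma_z)-N_+^{\mathrm{id}})^2$ is not literally correct: Lemma~\ref{lem:lowerboundbosonicrelativentropy} gives $\beta^{-1}s \gtrsim (N_+(z)-N_+)^2/(N_+(z)+N_+)$, and the denominator is only $O(\eta)$ when $N_+(z)\lesssim\eta$; for large $N_+(z)$ the penalty degenerates to $(\sqrt{N_+(z)}-\sqrt{N_+})^2$, which is merely linear in $N_+(z)$. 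The paper therefore works throughout with $(\sqrt{N_+(z)}-\sqrt{N_+})^2$ and splits the minimization into cases, letting the Onsager quartic dominate when $N_+(z)$ is large (its Case~1) and upgrading $(x-\sqrt{n_+})^2\geq c_2(x^2-n_+)^2$ only on the region where $x$ is bounded (its Case~3); your ``sum of two quadratics'' optimization needs the same case distinction to be rigorous, although the conclusion is unaffected. Second, you treat the classical entropy $S(\zeta_\Gamma)$ via the full partition function $-\beta^{-1}\ln\int e^{-\beta F}$ and a Laplace estimate, whereas the paper uses the Gibbs variational principle against a Gaussian reference $e^{\beta\widetilde\mu|z|^2}$ (costing $C\beta^{-1}\ln\eta$ plus a term $\widetilde\mu|z|^2$ absorbed into $F$) followed by a purely pointwise lower bound on the integrand; both are valid, since $F$ grows quartically so the partition function is controlled by $e^{-\beta\min F}$ times a factor polynomial in $\eta$, contributing only at the $\beta^{-1}\ln\eta=O(\eta^{2/3}\ln\eta)$ level.
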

\begin{proof}
Using Lemma \ref{lem:OnsagersLemma}, for any state $\Gamma\in \mathcal{S}$ we have
\begin{equation*}
\tr[ (\mathcal{H}_\eta^{\lambda,\delta} - \mu \mathcal{N}) \Gamma ]  \geq 
\tr\left[ \left(\sum_{p\in \Lambda}(p^2-\mu)a_p^* a_p + \delta a_0^* a_0 + \lambda \sqrt{N(\beta,\mu)} (a_0 + a_0^*) +\frac{\hat{v}(0)}{2\eta}\mathcal{N}^2-\frac{v(0)}{2\eta}\mathcal{N}\right) \Gamma \right]. 
\end{equation*}
Rewriting the right-hand side in terms of the upper symbol (cf. \eqref{eq:cnumber7}) and using \eqref{eq:uppersymbolsnumber} we obtain
\begin{equation*}
\tr[ (\mathcal{H}_\eta^{\lambda,\delta} - \mu \mathcal{N}) \Gamma ]  \geq \int_{\mathbb{C}}  \tr_{\mathscr{F}_+} [ H^{\lambda}(z) \Gamma_z] \zeta_{\Gamma}(z) \de z -C \eta^{2/3}
\end{equation*}
with
\begin{equation} \label{eq:grandpotsymbol}
H^{\lambda,\delta}(z) = \de \Upsilon(-Q \Delta) - \left( \mu + \frac{v(0)}{2\eta} + \frac{3 \hat{v}(0)}{2\eta} \right) ( |z|^2 + \mathcal{N}_+ ) + \frac{\hat{v}(0)}{2\eta} (|z|^2 + \mathcal{N}_+ )^2 +\delta (|z|^2-1)+ \lambda \sqrt{N(\beta,\mu)}( z + \overline{z})
\end{equation}
and $Q = \mathds{1}(-\Delta \neq 0)$. Here we used the assumption that $\mu\geq -C\eta^{2/3}$ for some $C>0$. Together with Lemma \ref{lem:entropyinequality} this implies
\begin{equation}\label{eq:firstlowerboundgrandpot} 
\tr[ (\mathcal{H}_\eta^{\lambda,\delta} - \mu \mathcal{N}) \Gamma ] - \frac{1}{\beta} S(\Gamma) \geq \int_{\mathbb{C}} \left\{ \tr_{\mathscr{F}_+} [ H^{\lambda,\delta}(z) \Gamma_z] - \beta^{-1} S(\Gamma_z) \right\} \zeta_{\Gamma}(z) \de z - \beta^{-1} S(\zeta_{\Gamma})-C\eta^{2/3}.
\end{equation}
 Using \eqref{eq:relativeentropygrandpotential} we obtain
\begin{align}
\int_{\mathbb{C}} \left\{ \tr_{\mathscr{F}_+}[ \de \Upsilon(-Q \Delta) \Gamma_z ] - \beta^{-1} S(\Gamma_z) \right\} \zeta_{\Gamma}(z) \de z =& \Phi_+^{\mathrm{id}} (\beta,\widetilde{\mu} )+ \widetilde{\mu} \int_{\mathbb{C}} \tr_{\mathscr{F}_+} [ \mathcal{N}_+ \Gamma_z ] \ \zeta_{\Gamma}(z) \de z \nonumber \\
&+ \frac{1}{\beta} \int_{\mathbb{C}} S(\Gamma_z, G_+^{\mathrm{id}}(\beta,\widetilde{\mu})) \ \zeta_{\Gamma}(z) \de z,
\label{eq:SSBCondensate3}
\end{align}
where $ G_+^{\mathrm{id}}(\beta,\widetilde{\mu})$ denotes the grand canonical Gibbs state of the ideal gas with chemical potential $\widetilde{\mu}$ restricted to the excited Fock space with a partial trace (cf. \eqref{eq:boundChemicalPotential4}).

Let $\gamma_z$ denote the one-particle density matrix of $\Gamma_z$ and $\gamma^{\mathrm{id}}_+$ the one of  $ G_+^{\mathrm{id}}(\beta,\widetilde{\mu})$ (which is a quasi-free state). Applying \eqref{eq:relativeentropylowerbound} we obtain
 \begin{equation}
 	S\left(\Gamma_z,G^{\mathrm{id}}_+(\beta,\widetilde{\mu})\right) \geq s\left( \gamma_z, \gamma^{\mathrm{id}}_+ \right),
 	\label{eq:asymptotics1pdm8b}
 \end{equation}
which, using Lemma \ref{lem:lowerboundbosonicrelativentropy}, leads to
\begin{equation}
s(\gamma_z,\gamma^{\mathrm{id}}_+) \geq c_1 \frac{ \Vert \gamma_z - \gamma^{\mathrm{id}}_+ \Vert_1^2}{\Vert 1 +\gamma^{\mathrm{id}}_+ \Vert \,\Tr_{\mathscr{F}_+}[\gamma_z +\gamma^{\mathrm{id}}_+]}. 
\label{eq:SSBCondensate3b}
\end{equation}
Here $c_1$ is the constant from Lemma \ref{lem:lowerboundbosonicrelativentropy}, i.e. $c_1\geq \frac{2}{27}$.  The operator norm on the right hand side satisfies $\Vert 1 + \gamma^{\mathrm{id}}_+ \Vert  \lesssim \beta^{-1}$. Let us define 
$$\Tr_{\mathscr{F}_+} \gamma_z =: N_+(z),\qquad  N_+(\beta,\widetilde \mu) := \Tr_{\mathscr{F}_+}[ \gamma^{\mathrm{id}}_+]=\Tr_{\mathscr{F}_+}[\cN_+ G_+^{\mathrm{id}}(\beta,\widetilde{\mu})].$$
Then an application of the triangle inequality of the trace-norm shows
\begin{equation}
\frac{\Vert \gamma_z - \gamma^{\mathrm{id}}_+ \Vert_1^2}{\Tr_{\mathscr{F}_+}[\gamma_z + \gamma^{\mathrm{id}}_+]} \geq \frac{| N_+(z) - N_+ |^2}{N_+(z) + N_+} \geq \left( \sqrt{N_+(z)} - \sqrt{N_+} \right)^2,
\label{eq:SSBCondensate3c}
\end{equation}
and hence
\begin{align}
\frac{1}{\beta} \int_{\mathbb{C}} S(\Gamma_z,G^{\mathrm{id}}_+(\beta,\widetilde{\mu})) \ \zeta_{\Gamma}(z) \de z \gtrsim \int_{\mathbb{C}} \left( \sqrt{N_+(z)} - \sqrt{N_+} \right)^2 \zeta_{\Gamma}(z) \de z. 
\label{eq:SSBCondensate3d}
\end{align}
An application of the Gibbs variational principle (classical case) shows
\begin{equation}
-\frac{1}{\beta} S(\zeta_{\Gamma}) \geq - \frac{1}{\beta} \ln \left( \int_{\mathbb{C}} \exp(\beta \widetilde{\mu} |z|^2 ) \de z \right) + \widetilde{\mu} \int_{\mathrm{C}} |z|^2 \ \zeta_{\Gamma}(z) \de z\geq -C\beta^{-1}\ln \eta + \widetilde{\mu} \int_{\mathrm{C}} |z|^2 \ \zeta_{\Gamma}(z) \de z
\label{eq:SSBCondensate4}
\end{equation}
for some $C>0$. In the last step we used \eqref{eq:muMinusMu0Bound1} and our assumption on $\mu$.  Next, we take a closer look at the interaction term (the one originating from $\cN^2$) in \eqref{eq:grandpotsymbol}. By the Cauchy-Schwarz inequality we have
\begin{equation}
\int_{\mathbb{C}} \tr_{\mathscr{F}_+}[ (|z|^2 + \mathcal{N}_+ )^2 \Gamma_z] \ \zeta_{\Gamma}(z) \de z \geq \int_{\mathbb{C}} [ |z|^2 + N_+(z) ]^2 \ \zeta_{\Gamma}(z) \de z.
\label{eq:SSBCondensate5}
\end{equation}
Thus, in combination, \eqref{eq:firstlowerboundgrandpot}, \eqref{eq:SSBCondensate3}, \eqref{eq:SSBCondensate3d}, \eqref{eq:SSBCondensate4}, and \eqref{eq:SSBCondensate5} show that for some $c_1 \geq \frac{2}{27}$ we have
\begin{align}
\tr[ (\mathcal{H}_\eta^{\lambda,\delta} - \mu \mathcal{N}) \Gamma ] - \frac{1}{\beta} S(\Gamma) \geq& \Phi_+^{\text{id}}(\beta,\widetilde{\mu}) - C \beta^{-1} \ln(\eta) + \int_{\mathbb{C}} \bigg\{ c_1 \left( \sqrt{N_+(z)} - \sqrt{N_+(\beta,\widetilde \mu) } \right)^2 + \frac{\hat{v}(0)}{2\eta} [|z|^2 + N_+(z)]^2 \nn \\
&  - (\mu-\widetilde{\mu} + \eta^{-1} d_v ) (|z|^2 + N_+(z)) +\delta |z|^2+ \lambda \sqrt{N(\beta,\mu)}( z + \overline{z}) \bigg\} \zeta_{\Gamma}(z) \de z, \label{eq:SSBCondensate6} 
\end{align}
where $d_v=\frac12(v(0)+3\hat{v}(0))$. To obtain a lower bound for the right-hand side of \eqref{eq:SSBCondensate6}, we will minimize the expression inside the curly brackets  on the right hand side of \eqref{eq:SSBCondensate6} treated as a function of two independent variables $\sqrt{N_+(z)}$ and $|z|$ (to be precise the expression in the curly brackets depends also on $w:=\cos(\arg z)$, but this dependence is trivial as the phase will always want to be $-\sgn(\lambda)$).

Before that, let us rescale the relevant terms. It follows from \eqref{eq:GrantCanonicalEffectiveIddealGasChemPot} and Proposition \ref{prop:expnumberparticles} that the quantities  
\begin{equation}\label{eq:rescaledvariablesminimization}
    n_0:=\eta^{-1} N_0(\beta,\widetilde \mu),\quad n_+:=\eta^{-1}N_+(\beta,\widetilde \mu),\qquad  n(\beta,\mu):=\eta^{-1} N(\beta,\mu)
\end{equation}
are uniformly bounded as $\eta \to \infty$. We also have 
\begin{equation}\label{eq:rescaledvariablesminimization-b}
n_0+n_+=\frac{\mu - \widetilde{\mu}}{\hat{v}(0)}=n(\beta,\mu) + O(\eta^{-1/6}\ln \eta).
\end{equation}
Introducing the variables 
\begin{equation}\label{eq:rescaledvariablesminimizationxy}
   x:= x(z)=\sqrt{N_+(z)\eta^{-1}}, \qquad y:=y(z)=|z|\eta^{-1/2},
\end{equation}
we deduce from \eqref{eq:SSBCondensate6}  that 
\begin{align}\label{eq:SSBCondensate6-b}
\tr[ (\mathcal{H}_\eta^{\lambda,\delta} - \mu \mathcal{N}) \Gamma ] - \frac{1}{\beta} S(\Gamma) \geq \Phi_+^{\text{id}}(\beta,\widetilde{\mu}) - C \beta^{-1} \ln(\eta) + \eta \int_{\mathbb{C}}   F^{\lambda,\delta}(x(z),y(z)) \zeta_{\Gamma}(z) \de z,  
\end{align}
where 
\begin{equation}\label{eq:Flambdelxy}
  F^{\lambda,\delta}(x,y)=  c_1 \left( x - \sqrt{n_+} \right)^2 + \frac{\hat{v}(0)}{2}  (x^2+y^2)^2 - (\mu-\widetilde{\mu} + \eta^{-1} d_v)   (x^2+y^2) +\delta y^2-  2|\lambda| \sqrt{n(\beta,\mu)} y.
\end{equation}

We will derive a lower bound for $  F^{\lambda,\delta}(x,y)$ with $x,y\ge 0$. We consider three cases: 
$$\left\{x^2 \ge \frac{2(\mu-\widetilde \mu +\eta^{-1}d_v)}{\hat v(0)} \right\},\quad \Big\{y\le \sqrt{n_0}+|\lambda|^{1/3}\Big\},\quad \left\{x^2 \le \frac{2(\mu-\widetilde \mu +\eta^{-1}d_v)}{\hat v(0)} \right\} \cap \Big\{y\ge \sqrt{n_0}+|\lambda|^{1/3}\Big\}.$$  

\textbf{Case 1:} If $x^2 \ge 2(\mu-\widetilde \mu +\eta^{-1}d_v)/\hat v(0)$, then we use $(x^2+y^2)^2 \ge x^2 (x^2+y^2)+y^4$ to estimate
$$
 \frac{\hat{v}(0)}{2} x^2 (x^2+y^2) \ge (\mu-\widetilde{\mu} + \eta^{-1} d_v) (x^2+y^2).  
$$
Therefore, by dropping $c_1 ( x - \sqrt{n_+})^2 \ge 0$, we get
\begin{align}\label{eq:Fxy-c0-final-pre}
  F^{\lambda,\delta}(x,y) \ge \frac{\hat{v}(0)}{2} y^4 + \delta y^2 - 2 |\lambda| \sqrt{n(\beta,\mu)} y \ge - C (\delta^2 + |\lambda|^{4/3}). 
\end{align}
Here we used the Cauchy--Schwarz and H\"older  inequalities. Moreover,   
\begin{align}\label{eq:Fxy-c0-final-pre2}
\frac{(\mu-\widetilde \mu)^2}{2\hat v(0)} - \delta n_0 \ge - \frac{\hat v(0)n_0^2}{2(\mu-\widetilde \mu)^2} \delta^2 \ge - C\delta^2
\end{align}
by Cauchy--Schwarz and \eqref{eq:muMinusMu0Bound1}. In combination, \eqref{eq:Fxy-c0-final-pre} and \eqref{eq:Fxy-c0-final-pre2} show
\begin{align}\label{eq:Fxy-c0-final}
  F^{\lambda,\delta}(x,y) \ge  -\frac{(\mu-\widetilde \mu)^2}{2\hat v(0)} +\delta n_0  - C (\delta^2 + |\lambda|^{4/3}). 
\end{align}

\textbf{Case 2:} If $y\le \sqrt{n_0}+|\lambda|^{1/3}$ we have 
\begin{align}\label{eq:Fxy-c1-1}
\delta y^2- 2|\lambda| \sqrt{n(\beta,\mu)} y &= \delta n_0 - 2|\lambda| \sqrt{n(\beta,\mu)} \sqrt{n_0} + \delta ( y^2 -n_0) - 2|\lambda| \sqrt{n(\beta,\mu)} (y-\sqrt{n_0}) \nn\\
&\ge  \delta n_0 -  2|\lambda| \sqrt{n(\beta,\mu) n_0}  - |\delta|  \left(\sqrt{n_0}+\lambda^{1/3})^2-n_0 \right) - 2 |\lambda|^{4/3}   \sqrt{n(\beta,\mu)}\nn\\
&\ge \delta n_0 -  2|\lambda| \sqrt{n(\beta,\mu) n_0}  - C (|\delta| |\lambda|^{1/3}+|\delta| |\lambda|^{2/3}+ |\lambda|^{4/3}). 
\end{align}
Moreover, by completing the square, 
\begin{align}\label{eq:Fxy-c1-2}
 \frac{\hat{v}(0)}{2}  (x^2+y^2)^2 - (\mu-\widetilde{\mu} + \eta^{-1} d_v)   (x^2+y^2) & = -\frac{(\mu-\widetilde \mu + \eta^{-1}d_v)^2}{2\hat v(0)} +  \frac{\hat{v}(0)}{2} \left(x^2+y^2 - \frac{\mu-\widetilde{\mu}+\eta^{-1} d_v}{\hat{v}(0)}  \right)^2 \nn \\
 &\ge -\frac{(\mu-\widetilde \mu + \eta^{-1}d_v)^2}{2\hat v(0)} \ge -\frac{(\mu-\widetilde \mu)^2}{2\hat v(0)}  - C \eta^{-1}. 
\end{align}
To obtain the last bound, we used \eqref{eq:muMinusMu0Bound1}. Putting \eqref{eq:Fxy-c1-1} and \eqref{eq:Fxy-c1-2} together and dropping $c_1 ( x - \sqrt{n_+})^2 \ge 0$ again, we conclude that 
\begin{align}\label{eq:Fxy-c1-final}
  F^{\lambda,\delta}(x,y) \ge  -\frac{(\mu-\widetilde \mu)^2}{2\hat v(0)}  +  \delta n_0 -  2|\lambda| \sqrt{n(\beta,\mu) n_0}   - C (|\delta| |\lambda|^{1/3}+|\delta| |\lambda|^{2/3}+ |\lambda|^{4/3}+ \eta^{-1}). 
\end{align}

\textbf{Case 3:} If $x^2 \le 2(\mu-\widetilde \mu +\eta^{-1}d_v)/\hat v(0)$ and $y\ge \sqrt{n_0}+|\lambda|^{1/3}$, then we have
\begin{align}\label{eq:Fxy-c3-1}
c_1 (x-\sqrt{n_+})^2= c_1 \frac{(x^2-n_+)^2}{(x+\sqrt{n_+})^2} \ge c_2 (x^2-n_+)^2
\end{align}
for a constant $c_2 > 0$, and similarly
\begin{align}\label{eq:Fxy-c3-2}
2|\lambda|  \sqrt{n(\beta,\mu)} (y-\sqrt{n_0}) = 2|\lambda|  \sqrt{n(\beta,\mu)} \frac{y^2-n_0}{y+\sqrt{n_0}} \le c_3 |\lambda|^{2/3} (y^2-n_0)
\end{align}
for a constant $c_3 > 0$. Using \eqref{eq:Fxy-c3-1} and completing the square as in \eqref{eq:Fxy-c1-2}, we can bound  
\begin{align}\label{eq:Fxy-c3-3}
&c_2 (x-\sqrt{n_+})^2 +  \frac{\hat{v}(0)}{2}  (x^2+y^2)^2 - (\mu-\widetilde{\mu} + \eta^{-1} d_v)   (x^2+y^2) \nn\\
&\ge  -\frac{(\mu-\widetilde \mu + \eta^{-1}d_v)^2}{2\hat v(0)} +  \frac{\hat{v}(0)}{2} \left(x^2+y^2 - \frac{\mu-\widetilde{\mu}+\eta^{-1} d_v}{\hat{v}(0)}  \right)^2 +  c_2 (x^2-n_+)^2 \nn\\
&\ge  -\frac{(\mu-\widetilde \mu + \eta^{-1}d_v)^2}{2\hat v(0)}  + 2 c_4 \left( y^2 + n_+ - \frac{\mu-\widetilde{\mu}+\eta^{-1} d_v}{\hat{v}(0)} \right)^2\nn \\
&\ge  -\frac{(\mu-\widetilde \mu)^2}{2\hat v(0)}  + c_4 \left( y^2  - n_0 \right)^2  - C \eta^{-1}
\end{align}
with $c_4= \frac{1}{4}\min \left\{ \frac{\hat{v}(0)}{2} , c_2 \right\} > 0.$
Here we used the Cauchy--Schwarz inequality  $a^2+b^2\ge \frac{1}{2}(a+b)^2$ with $a=n_+-x^2$, $b=x^2+y^2- \frac{\mu-\widetilde{\mu}+\eta^{-1} d_v}{\hat{v}(0)}$. We also used \eqref{eq:rescaledvariablesminimization-b} in the form 
$$
n_+ - \frac{\mu-\widetilde{\mu}+\eta^{-1} d_v}{\hat{v}(0)} = -n_0 + O(\eta^{-1}). 
$$
From \eqref{eq:Fxy-c3-2} and \eqref{eq:Fxy-c3-3} we conclude that 
\begin{align}\label{eq:Fxy-c3-final}
  F^{\lambda,\delta}(x,y) & \ge  -\frac{(\mu-\widetilde \mu)^2}{2\hat v(0)}   + c_4 \left( y^2 - {n_0}\right)^2 + \delta y^2  - 2|\lambda| \sqrt{n(\beta,\mu) n_0} - c_3 |\lambda|^{2/3}(y^2-n_0)  - C \eta^{-1}
  \nn\\
  &\geq -\frac{(\mu-\widetilde \mu)^2}{2\hat v(0)} - 2 |\lambda| \sqrt{n(\beta,\mu) n_0} + \delta n_0 - C \eta^{-1} + c_4 (y^2-n_0)^2 - (|\delta| + c_3 |\lambda|^{2/3}) (y^2-n_0)  \nn\\
  &\ge -\frac{(\mu-\widetilde \mu)^2}{2\hat v(0)} - 2 |\lambda| \sqrt{n(\beta,\mu) n_0} + \delta n_0 - C ( \eta^{-1} + \delta^2 + |\lambda|^{4/3}).
  \end{align}

In summary, from \eqref{eq:Fxy-c0-final}, \eqref{eq:Fxy-c1-final} and \eqref{eq:Fxy-c3-final}, we find that for all $x,y\ge 0$, 
\begin{align}\label{eq:Fxy-all-final}
  F^{\lambda,\delta}(x,y)  \ge   -\frac{(\mu-\widetilde \mu)^2}{2\hat v(0)}  +  \delta  n_0  - 2|\lambda| \sqrt{n(\beta,\mu) n_0}  - C (\delta^2 + |\delta| |\lambda|^{1/3} + |\lambda| ^{4/3} + \eta^{-1} ). 
  \end{align}
 Inserting  \eqref{eq:Fxy-all-final} in  \eqref{eq:SSBCondensate6-b} and using $\int_{\mathbb{C}}  \zeta_{\Gamma}(z) \de z =1$,  we obtain
 \begin{align}\label{eq:SSB-Phi-delta-lambda-lower-con-proof}
\tr[ (\mathcal{H}_\eta^{\lambda,\delta} - \mu \mathcal{N}) \Gamma ]  - \frac{1}{\beta} S(\Gamma) &\geq - \frac{(\mu-\widetilde \mu)^2\eta}{2\hat v(0)}  +  \delta N_0(\beta,\widetilde \mu)  - 2|\lambda| \sqrt{N(\beta,\mu) N_0(\beta,\widetilde \mu)} \nn\\
&\quad - C \eta (\delta^2 + |\delta| |\lambda|^{1/3} + |\lambda| ^{4/3} + \eta^{-1/3} \ln \eta). 
\end{align}
Since this bound holds for all states $\Gamma$, we have the desired lower bound on the grand potential.  
\end{proof}


\section{Proof of the main result} \label{sec:proofmain}
In this section we shall prove Theorem \ref{thm:SSBCondensate}.  We will use a first-order
Griffith argument (i.e. a Hellmann–Feynman type argument), based on the estimate on the perturbed grand potential $\Phi^{\lambda,\delta}(\beta,\mu)$ in Proposition \ref{prop:grandpotperturbedupperlowerbound}. We shall divide the proof into three parts, each corresponding to the separate statements in \eqref{eq:thmmfbec}, \eqref{eq:SSBCondensate} and \eqref{eq:continuityCondensateFraction}.

\begin{proof}[Proof of Theorem \ref{thm:SSBCondensate} a)] Note that this part of the main theorem does not involve the symmetry breaking term.  Therefore, it suffices to consider the perturbed grand potential $\Phi^{\lambda,\delta}(\beta,\mu)$ in \eqref{eq:perturbedgrantPotential} with $\lambda=0$. We will first prove the second equality in \eqref{eq:thmmfbec}. To this end notice that \eqref{eq:perturbedgrantPotential} implies 
\begin{equation*}
\frac{\partial \Phi^{0,\delta}(\beta,\mu)}{\partial \delta}\Big|_{\delta=0} =   \tr[ a^*_0 a_0 G_{\beta,\mu} ].
\end{equation*}
On the other hand, by the Gibbs variational principle,   the map $\delta \mapsto \Phi^{0,\delta}(\beta,\mu)$ is concave (as an infimum over a family of affine functions). Hence
\begin{equation}\label{eq:lowerupperboundn0gibbs}
    \frac{\Phi^{0,\delta}(\beta,\mu)-\Phi^{0,0}(\beta,\mu)}{\delta}\leq \tr[ a^*_0 a_0 G_{\beta,\mu} ] \leq \frac{\Phi^{0,0}(\beta,\mu)-\Phi^{0,-\delta}(\beta,\mu)}{\delta}
\end{equation}
for any $\delta>0$. Using Proposition \ref{prop:grandpotperturbedupperlowerbound} with $\lambda=0$, we obtain
\begin{equation*}
\begin{aligned}
\left| \tr[ a^*_0 a_0 G_{\beta,\mu} ]  - N_0(\beta,\widetilde{\mu}) \right| \le \frac{C \eta (\delta^2 + \eta^{-1/3} \ln \eta)}{\delta}
\end{aligned}
\end{equation*}
for all $\delta>0$ and $\eta$ large. 
Choosing $\delta=\eta^{-\alpha}$ for some constant $\alpha \in (0,1/6)$ and  using Proposition \ref{prop:expnumberparticles} together with  \eqref{eq:condensateFractionEffectiveTheory}, we obtain the second equality in  \eqref{eq:thmmfbec}:
\begin{equation} \label{eq:abc}
\lim_{\eta\to \infty}    \frac{\tr[ a^*_0 a_0 G_{\beta,\mu} ] }{N(\beta,\mu)}=\lim_{\eta\to \infty}\frac{N_0(\beta,\widetilde{\mu})}{N(\beta,\mu)} = [1-\kappa^{-3/2}]_+. 
\end{equation}

In order to prove the first one, it is enough to show that for any $p\neq 0$ we have
\begin{equation}\label{eq:npbound}
      \tr[ a^*_p a_p G_{\beta,\mu} ] =o(\eta).
\end{equation}
This is true because $\tr[ a^*_p a_q G_{\beta,\mu} ] = 0$ for $p \neq q$, which follows from the translation-invariance of $G_{\beta,\mu}$. We will obtain this estimate using a Hellmann--Feynman argument. Let  $\epsilon\in (0,\frac12)$ and recall \eqref{eq:grandPotentialFunctional}. We have
$$\epsilon  \tr[ a^*_p a_p G_{\beta,\mu} ]=\mathcal{G}(G_{\beta,\mu})-\mathcal{G}_\epsilon(G_{\beta,\mu})$$
where
$$\mathcal{G}_\epsilon(G_{\beta,\mu})=\tr[(\mathcal{H}_{\eta} - \mu \mathcal{N} -\epsilon a_p^* a_p) G_{\beta,\mu}) ] - \frac{1}{\beta} S(G_{\beta,\mu})).$$
For an upper bound on $\mathcal{G}(G_{\beta,\mu})$ we simply use Proposition \ref{prop:grandpotupperbound}. To obtain a lower bound we repeat the proof of Proposition \ref{prop:grandpotlowerbound} in the case $\delta,\lambda=0$ with the only difference that now the ideal gas corresponds to the one-body Hamiltonian $d\Upsilon(Q(-\Delta)-\epsilon |p\rangle \langle p|).$ In particular, we obtain the lower bound of the form 
$$\mathcal{G}_\epsilon(G_{\beta,\mu})\geq \Phi_+^{\mathrm{id},\epsilon} (\beta,\widetilde{\mu}) - \frac{(\mu-\widetilde{\mu})^2\eta}{2\hat{v}(0)} - C \eta^{2/3} \ln \eta, $$
where 
$$\Phi_+^{\mathrm{id},\epsilon} (\beta,\widetilde{\mu})=   \frac{1}{\beta} \sum_{k \in \Lambda_+^*, k\neq p} \ln\left( 1 - \exp(-\beta(k^2 - \widetilde{\mu})) \right)+\ln\left( 1 - \exp(-\beta(p^2-\epsilon - \widetilde{\mu})) \right).$$
Applying this estimate we obtain
\begin{equation} \label{eq:epsilonnpbound}
\epsilon  \tr[ a^*_p a_p G_{\beta,\mu} ]\leq   \frac{1}{\beta} \left[\ln\left( 1 - \exp(-\beta(p^2 - \widetilde{\mu})) \right)-\ln\left( 1 - \exp(-\beta(p^2-\epsilon - \widetilde{\mu})) \right)\right]+ C \eta^{2/3} \ln \eta.
\end{equation}
Consider the function $f: (0,\frac12)\to \mathbb{R}$ given by 
$$f(\epsilon)=\ln\left( 1 - \exp(-\beta(p^2-\epsilon - \widetilde{\mu})) \right).$$
Since 
$$f'(\epsilon)=-\frac{\beta}{\exp(-\beta(p^2-\epsilon - \widetilde{\mu}))-1},$$
using $\widetilde{\mu}<0$ we obtain
$$|f'(\epsilon)|\leq C,$$
which applied to \eqref{eq:epsilonnpbound} implies
$$\epsilon  \tr[ a^*_p a_p G_{\beta,\mu} ]\leq C \eta \ln \eta. $$ 
Choosing, e.g., $\epsilon=\frac14$ yields \eqref{eq:npbound}. This ends the proof of part a) of Theorem \ref{thm:SSBCondensate}.
\end{proof}

Due to technical reasons that soon become clear, let us first prove part c) of Theorem \ref{thm:SSBCondensate} and then turn to  part b).  
\begin{proof}[Proof of Theorem \ref{thm:SSBCondensate} c)]
The proof follows the same strategy as the in the proof of the second equality in \eqref{eq:thmmfbec}, but this time we keep $\lambda \neq 0$. Using the same argument as for   \eqref{eq:lowerupperboundn0gibbs} we obtain for any $\delta>0$
\begin{equation}\label{eq:lowerupperboundn0gibbslambda}
    \frac{\Phi^{\lambda,\delta}(\beta,\mu)-\Phi^{\lambda,0}(\beta,\mu)}{\delta}\leq \tr[ a^*_0 a_0 G^{\lambda}_{\beta,\mu} ] \leq \frac{\Phi^{\lambda,0}(\beta,\mu)-\Phi^{\lambda,-\delta}(\beta,\mu)}{\delta}.
\end{equation}
Therefore, Proposition \ref{prop:grandpotperturbedupperlowerbound} implies that 
$$
\left| \tr[ a^*_0 a_0 G^{\lambda}_{\beta,\mu} ] - N_0(\beta,\widetilde \mu) \right| \le \frac{C \eta (\delta^2 + |\delta| |\lambda|^{1/3} + |\lambda| ^{4/3} + \eta^{-1/3} \ln \eta)}{\delta}
$$
for all $\delta>0$, $\lambda\in \R$ and $\eta$ large enough. Taking $\delta =|\lambda|$ and using Proposition \ref{prop:expnumberparticles} and \eqref{eq:condensateFractionEffectiveTheory}, we obtain \eqref{eq:continuityCondensateFraction}:
\begin{equation} \label{eq:abc-2}
\lim_{\lambda\to 0}\lim_{\eta\to \infty}    \frac{\tr[ a^*_0 a_0 G_{\beta,\mu} ] }{N(\beta,\mu)}=\lim_{\lambda\to 0}\lim_{\eta\to \infty} \left( \frac{N_0(\beta,\widetilde{\mu})}{N(\beta,\mu)} + O(|\lambda|^{1/3}) + O(|\lambda|^{-1} \eta^{-1/3} \ln \eta) \right) = [1-\kappa^{-3/2}]_+. 
\end{equation}
\end{proof}

It remains to prove \eqref{eq:SSBCondensate}. 

\begin{proof}[Proof of Theorem \ref{thm:SSBCondensate} b)] Note that by  the Cauchy--Schwarz  inequality, the result in part c) implies the result in part b) if $\kappa\le 1$. Therefore, it remains to consider the condensed phase $\kappa>1$.  Now we consider the perturbed grand potential $\Phi^{\lambda,\delta}(\beta,\mu)$ in \eqref{eq:perturbedgrantPotential} with $\delta=0$.  We proceed as before and obtain 
\begin{equation*}
\frac{\partial \Phi^{\lambda,0}(\beta,\mu)}{\partial \lambda} =  \sqrt{N(\beta,\mu)} \tr[ (a^*_0 +a_0) G^\lambda_{\beta,\mu} ],
\end{equation*}
which, by concavity of $\Phi^{\lambda,0}(\beta,\mu)$, implies
\begin{equation}
\frac{\Phi^{\lambda+\epsilon,0}(\beta,\mu)-\Phi^{\lambda,0}(\beta,\mu)}{\epsilon}\leq \sqrt{N(\beta,\mu)} \tr[ (a^*_0 +a_0) G^\lambda_{\beta,\mu} ]\leq \frac{\Phi^{\lambda,0}(\beta,\mu)-\Phi^{\lambda-\epsilon,0}(\beta,\mu)}{\epsilon}.
\end{equation}
Using Proposition \ref{prop:grandpotperturbedupperlowerbound} we obtain 
\begin{equation*}
\left| \sqrt{N(\beta,\mu)} \tr[ (a^*_0 +a_0) G^\lambda_{\beta,\mu} ] - 2 \sqrt{N(\beta,\mu) N_0(\beta,\widetilde \mu)} \right| \le \frac{C \eta ( |\lambda| ^{4/3} + \eta^{-1/3} \ln \eta)}{\eps}. 
\end{equation*}
Choosing $\epsilon=|\lambda|/2$ and using Proposition \ref{prop:expnumberparticles} and \eqref{eq:condensateFractionEffectiveTheory} yields 
\begin{equation}
    \lim_{\lambda \to 0} \lim_{N \to \infty} \frac{|\tr[ (a_0^*+a_0) G^{\lambda}_{\beta,\mu} ]|}{2 N(\beta,\mu)^{1/2}} = \sqrt{\left[ 1 - \frac{1}{\kappa^{3/2}} \right]_+}.
    \label{eq:SSBCondensatereal}
\end{equation}

Finally, we shall use \eqref{eq:SSBCondensatereal} and \eqref{eq:continuityCondensateFraction} to prove \eqref{eq:SSBCondensate}. From \eqref{eq:SSBCondensatereal} we have  
\begin{equation} \label{eq:realpartwlimit}
\lim_{\lambda\to 0}\lim_{\eta\to \infty}|\Re(w(\lambda,\eta))|=  \sqrt{\left[ 1 - \frac{1}{\kappa^{3/2}} \right]_+},\quad \text{with}\quad w(\lambda,\eta)=\frac{\tr[ a_0 G^{\lambda}_{\beta,  \mu} ]}{\sqrt{N(\beta,\mu)}}. 
\end{equation}
On the other hand, by the Cauchy-Schwartz inequality, we have
$$|w(\lambda,\eta)|^2\leq \frac{|\tr[ a_0^* a_0 G^{\lambda}_{\beta,   \mu} ]|}{  N(\beta,\mu)}$$
which by \eqref{eq:continuityCondensateFraction} implies
\begin{equation} \label{eq:wsquaredlimit}
    \lim_{\lambda\to 0}\lim_{\eta\to \infty}|w(\lambda,\eta)|^2\leq \left[ 1 - \frac{1}{\kappa^{3/2}} \right]_+. 
\end{equation}
But \eqref{eq:realpartwlimit} and \eqref{eq:wsquaredlimit} imply that
\begin{equation*} 
\lim_{\lambda\to 0}\lim_{\eta\to \infty}|\Im (w(\lambda,\eta))|=0.
\end{equation*}
In combination, this result and \eqref{eq:SSBCondensatereal} prove \eqref{eq:SSBCondensate}.
\end{proof}

\appendix
\section{Properties of the effective chemical potential}
\label{app:effectiveChemicalPotential}
In this section we investigate equation \eqref{eq:GrantCanonicalEffectiveIddealGasChemPot} for the effective chemical potential appearing in our statements for the grand potential. The first lemma guarantees the existence of a unique solution and provides a priori bounds.
\begin{lemma}\label{lem:effectiveChemicalPotentialAPriori}
    The following three statements hold:
    \begin{enumerate}[label=(\alph*)]
    \item Assume that $\beta,\eta, \hat{v}(0) > 0$ and $\mu \in \mathbb{R}$. The equation
    \begin{equation}
        \sum_{p \in \Lambda^*} \frac{1}{e^{\beta(p^2 - \widetilde{\mu})}-1} = \frac{(\mu - \widetilde{\mu})\eta}{\hat{v}(0)}
        \label{eq:GrantCanonicalEffectiveIddealGasChemPotv2}
    \end{equation}
    for $\widetilde{\mu}$ admits a unique solution in the set $(-\infty,0)$.  
    \item We consider the limit $\eta \to \infty$, $\beta = \kappa \beta_{\mathrm{c}}(\eta)$ with $\kappa \in (0,\infty)$ and $\beta_{\mathrm{c}}$ in \eqref{eq:crittemp}. We assume that $\mu$, which may depend on $\eta$, satisfies $-\eta^{2/3} \lesssim \mu \lesssim 1$. There exists a constant $c>0$ such that the unique solution to \eqref{eq:GrantCanonicalEffectiveIddealGasChemPotv2} satisfies 
    \begin{equation}
        c \leq \mu - \widetilde{\mu} \leq c^{-1}.
        \label{eq:muMinusMu0Bound}
    \end{equation}
    Moreover, if $\mu \geq 0$ then $-\widetilde{\mu} \lesssim 1$ and if $\mu < 0$ we have $-\widetilde{\mu} \lesssim \eta^{2/3}$. 
    \item Under the assumptions stated in part~(b) there exists a constant $c > 0$ such that
    \begin{equation}
        c \eta \leq \sum_{p \in \Lambda^*} \frac{1}{e^{\beta(p^2 - \widetilde{\mu})}-1} \leq c^{-1} \eta.
        \label{eq:appAndi1}
    \end{equation}
    \end{enumerate}
\end{lemma}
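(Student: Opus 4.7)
My plan is to prove parts (a), (b), and (c) of the lemma in that order, based on a monotonicity argument for the auxiliary function
\[
F(\widetilde\mu) := \sum_{p\in\Lambda^*}\frac{1}{e^{\beta(p^2-\widetilde\mu)}-1} - \frac{(\mu-\widetilde\mu)\eta}{\hat v(0)}, \qquad \widetilde\mu \in (-\infty,0),
\]
combined with standard Riemann-sum asymptotics for the Bose occupation. For part (a), each summand of $F$ and the linear term $-(\mu-\widetilde\mu)\eta/\hat v(0)$ are strictly increasing in $\widetilde\mu$, so $F$ is continuous and strictly increasing on $(-\infty,0)$; moreover $F(\widetilde\mu)\to -\infty$ as $\widetilde\mu\to -\infty$ (the sum vanishes while the linear term diverges to $+\infty$) and $F(\widetilde\mu)\to +\infty$ as $\widetilde\mu\to 0^-$ (the $p=0$ contribution blows up). The intermediate value theorem then yields the unique root.

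For part (b), the strategy is to sandwich the unique root by explicit test points $\widetilde\mu_\pm$ and check the sign of $F(\widetilde\mu_\pm)$. The key analytic ingredient is the asymptotic
\[
\sum_{p\in 2\pi\mathbb{Z}^3\setminus\{0\}}\frac{1}{e^{\beta(|p|^2+a)}-1}=\frac{\beta^{-3/2}}{(2\pi)^3}\int_{\mathbb{R}^3}\frac{dq}{e^{|q|^2+\beta a}-1}\,(1+o(1))\qquad (\beta\to 0),
\]
uniformly for $\beta a$ in a bounded subset of $[0,\infty)$, where the integral on the right is continuous, positive, and decreasing in $\beta a$. Under $\beta\sim\eta^{-2/3}$ and $-\eta^{2/3}\lesssim\mu\lesssim 1$ the relevant values of $\beta a=-\beta\widetilde\mu_\pm$ stay in a compact set determined by $\kappa$, so the displayed quantity is of order $\eta$ with constants bounded away from $0$ and $\infty$. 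For the lower bound $\mu-\widetilde\mu\ge c$, I would distinguish the trivial case $\mu\ge c$ (where $\widetilde\mu<0$ already gives the inequality) from $\mu<c$; in the latter case set $\widetilde\mu_-:=\mu-c<0$, apply the Riemann estimate to the $p\neq 0$ part to obtain the sum at $\widetilde\mu_-$ bounded below by $c_I\eta$, and choose $c$ with $c<c_I\hat v(0)$ to force $F(\widetilde\mu_-)>0$ and hence $\widetilde\mu<\widetilde\mu_-$. For the upper bound $\mu-\widetilde\mu\le c^{-1}$, take $\widetilde\mu_+:=\mu-c^{-1}$ with $c\le 1/2$ so that $-\widetilde\mu_+\ge 1$; bounding the $p\neq 0$ part monotonically by $\sum_{p\neq 0}(e^{\beta p^2}-1)^{-1}\lesssim\eta$ and the $p=0$ term by $(-\beta\widetilde\mu_+)^{-1}\lesssim\beta^{-1}=O(\eta^{2/3})$ gives the total sum at $\widetilde\mu_+$ bounded above by $C\eta$, hence $F(\widetilde\mu_+)<0$ for $c$ small enough. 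The remaining bounds $-\widetilde\mu\lesssim 1$ (when $\mu\ge 0$) and $-\widetilde\mu\lesssim \eta^{2/3}$ (when $\mu<0$) then follow at once from $\widetilde\mu\ge \mu-c^{-1}$ and the assumption on $\mu$.

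Part (c) is an immediate corollary of (b): substituting the two-sided bound $c\le \mu-\widetilde\mu\le c^{-1}$ into the defining equation \eqref{eq:GrantCanonicalEffectiveIddealGasChemPotv2} gives the desired two-sided bound on the sum with constant proportional to $c/\hat v(0)$. The main technical obstacle is rendering the Riemann-sum estimate quantitative and uniform across the full admissible range of $\mu$, covering simultaneously the most condensed regime $\mu\sim 1$ (where $\widetilde\mu$ is close to zero and $\beta a$ near $0$) and the most non-condensed regime $\mu\sim -\eta^{2/3}$ (where $\widetilde\mu\sim -\eta^{2/3}$ and $\beta a$ is an order-one positive number). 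Uniformity follows from positivity and continuity of the Bose integral on any compact subset of $[0,\infty)$, but has to be carried out carefully so that the resulting constants are independent of $\mu$ and $\eta$, depending only on $\kappa$ and $\hat v(0)$.
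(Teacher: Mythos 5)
Your proposal is correct, and parts (a) and (c) coincide with the paper's proof (same monotone auxiliary function and intermediate value theorem for (a); (c) read off from the defining equation and the two-sided bound on $\mu-\widetilde{\mu}$). The core analytic ingredients for part (b) are also the same — the Riemann-sum comparison of the excited Bose sum with $\beta^{-3/2}$ times the Bose integral, and the elementary bound $(e^{x}-1)^{-1}\le x^{-1}$ for the $p=0$ mode — but the organization differs. The paper works directly with the identity satisfied by the actual root: it first derives $\mu-\widetilde{\mu}\lesssim (-\eta^{1/3}\widetilde{\mu})^{-1}+1$, bootstraps this into the a priori bounds $-\widetilde{\mu}\lesssim 1$ (resp.\ $\lesssim\eta^{2/3}$), and then feeds those back in to get the two-sided bound on $\mu-\widetilde{\mu}$. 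You instead evaluate the sign of the monotone function $F$ at the explicit test points $\widetilde{\mu}_{\pm}=\mu-c^{\pm 1}$ and sandwich the root; the bounds on $-\widetilde{\mu}$ then drop out at the end rather than being needed as intermediate input. Your route avoids the paper's slightly self-referential bootstrap and makes the logical dependence cleaner, at the price of having to verify that $\beta(c-\mu)$ stays in a compact set so that the Bose integral is bounded below uniformly — a point you correctly identify as the main thing to make quantitative, and which the paper also needs (in its display giving $\mu-\widetilde{\mu}\gtrsim 1$) but states with the same level of brevity.
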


Before we prove the above lemma, we state and prove a lemma that allows us to approximate momentum sums by integrals.  
\begin{proof}
    For $\widetilde{\mu} \in (-\infty,0)$ we define the function
    \begin{equation}
        f(\widetilde{\mu}) = \sum_{p \in \Lambda^*} \frac{1}{e^{\beta(p^2 - \widetilde{\mu})}-1} +  \frac{(\widetilde{\mu} - \mu)\eta}{\hat{v}(0)}.
        \label{eq:EffectiveChemicalPOtential1}
    \end{equation}
    It is not difficult to check that $f$ is continuous, strictly monotone increasing, and satisfies $f(\widetilde{\mu}) \to -\infty$ for $\widetilde{\mu} \to - \infty$ and $f(\widetilde{\mu}) \to +\infty$ for $\widetilde{\mu} \to 0$. This implies part~(a) and it remains to prove parts~(b) and (c).

    With \eqref{eq:GrantCanonicalEffectiveIddealGasChemPotv2}, $\widetilde{\mu} < 0$, $(\exp(x)-1)^{-1} \leq x$ for $x>0$, and an argument based on a Riemann sum approximation, it is not difficult to see that
    \begin{equation}
        \mu - \widetilde{\mu} \lesssim \frac{1}{\eta} \left[ \frac{1}{-\beta \widetilde{\mu}} + \beta^{-3/2} \int_{\mathbb{R}^3} \frac{1}{\exp(p^2)-1} \de p  \right] \lesssim \frac{1}{-\eta^{1/3} \widetilde{\mu}} + 1.
        \label{eq:EffectiveChemicalPOtential2}    
    \end{equation}
    If $\mu \geq 0$ the above inequality implies $-\widetilde{\mu} \lesssim 1$. In the case $\mu < 0$, we additionally use $\mu \gesssim - \eta^{2/3}$ and find $-\widetilde{\mu} \lesssim \eta^{2/3}$.

    We again interpret the sum over momenta as a Riemann sum and use $-\widetilde{\mu} \lesssim \eta^{2/3}$ to check that
    \begin{equation}
        \mu - \widetilde{\mu} \gesssim \frac{1}{\eta \beta^{3/2}}  \int_{\mathbb{R}^3} \frac{1}{\exp(p^2 + c)-1} \de p \gesssim 1.
        \label{eq:EffectiveChemicalPOtential3} 
    \end{equation}
    This proves the lower bound in \eqref{eq:muMinusMu0Bound}. To derive an upper bound for $\mu - \widetilde{\mu}$ in the case $\mu < 0$, we combine \eqref{eq:EffectiveChemicalPOtential3} and \eqref{eq:EffectiveChemicalPOtential2} as follows:
    \begin{equation}
        \mu - \widetilde{\mu} \lesssim \frac{1}{-\eta^{1/3} \widetilde{\mu}} + 1 \lesssim \frac{1}{\eta^{1/3}(1-\mu)} + 1 \leq \frac{1}{\eta^{1/3}} + 1 \lesssim 1.
        \label{eq:EffectiveChemicalPOtential4} 
    \end{equation}
    This proves part~(b).

    Part~(c) of Lemma~\ref{lem:effectiveChemicalPotentialAPriori} follows from \eqref{eq:GrantCanonicalEffectiveIddealGasChemPotv2} and \eqref{eq:muMinusMu0Bound}. 
    \end{proof}

 \vspace{0.5cm}

\textbf{Acknowledgments.} A. D. gratefully acknowledges funding from the Swiss National Science Foundation (SNSF) through the Ambizione grant PZ00P2 185851. P. T. N. was partially supported by the European Research Council via the ERC Consolidator Grant RAMBAS (Project-Nr. 10104424). The work of M. N. was supported by the Polish-German NCN-DFG
grant Beethoven Classic 3 (project no. 2018/31/G/ST1/01166). 
\vspace{0.5cm}

\textbf{Conflict of interest statement.} On behalf of all authors, the corresponding author states that there is no conflict of interest.

\vspace{0.5cm}

\textbf{Data availability statement.} No new data were created or analysed in this study. Data sharing is not applicable to this article.

\vspace{0.5cm}

\bibliographystyle{siam}

\vspace{0.5cm}

\noindent (Andreas Deuchert) Department of Mathematics, Virginia Tech \\ 225 Stanger Street, Blacksburg, VA, 24060-1026, USA\\
E-mail address: \texttt{andreas.deuchert@vt.edu}
\vspace{0.5cm}

\noindent(Phan Th\`anh Nam) Department of Mathematics, LMU Munich, Theresienstrasse 39, 80333 Munich, Germany\\
E-mail address: \texttt{nam@math.lmu.de}
\vspace{0.5cm}

\noindent (Marcin Napi\'orkowski) Department of Mathematical Methods in Physics, Faculty of Physics, University of Warsaw, Pasteura 5, 02-093, Warsaw, Poland\\
E-mail address: \texttt{marcin.napiorkowski@fuw.edu.pl}

\end{document}